\newtheorem{numremark}[theorem]{Remark}
\newcommand{\myquot}[1]{``#1''}
\newcommand{\proj}[1]{\pi_I(#1)}
\newcommand{\set}[1]{\{#1\}}
\newcommand{\sepp}{\#}
\newcommand{\blank}{*}
\newcommand{\nats}{\mathbb{N}}
\newcommand{\size}[1]{|#1|}
\renewcommand{\epsilon}{\varepsilon}
\renewcommand{\phi}{\varphi}
\newcommand{\U}{\mathrm{U}}
\newcommand{\aut}{\mathcal{A}}
\newcommand{\auttrack}{\mathcal{T}}
\newcommand{\ops}{\mathrm{Ops}}
\newcommand{\inc}[1]{#1 := #1 + 1}
\newcommand{\reset}[1]{#1 := 0}
\newcommand{\maxx}[3]{#1 := \max(#2,#3)}
\newcommand{\val}{\nu}
\newcommand{\lbls}{\Lambda}
\newcommand{\flatten}{\mathrm{flat}}
\newcommand{\run}{\rho}
\newcommand{\delaygame}[1]{\Gamma\!_{f}(#1)}
\newcommand{\SigmaO}{\Sigma_O}
\newcommand{\SigmaI}{\Sigma_I}
\newcommand{\stratO}{\tau_O}
\newcommand{\stratI}{\tau_I}
\newcommand{\p}{p}
\newcommand{\game}{\mathcal{G}}
\newcommand{\idx}[1]{\mathrm{idx}(#1)}
\newcommand{\eqclass}[1]{[#1]}
\newcommand{\eqop}{\equiv_{\mathrm{ops}}}
\newcommand{\eqoppar}[1]{\eqop^{#1}}
\newcommand{\eqwordpar}[1]{\equiv_\aut^{#1}}
\newcommand{\eqclasswordpar}[2]{\eqclass{#1}_{\eqwordpar{#2}}}
\newcommand{\quotientwordpar}[1]{\hspace{-0pt}\slash\hspace{-0pt}\eqwordpar{#1}}
\newcommand{\eqwordparpro}[1]{=_\aut^{#1}}
\newcommand{\eqclasswordparpro}[2]{\eqclass{#1}_{\eqwordparpro{#2}}}
\newcommand{\exptime}{\textsc{ExpTime}}
\newcommand{\pspace}{\textsc{PSpace}}
\title{Unbounded Lookahead in WMSO+U Games\thanks{Supported by the project ``TriCS'' (ZI~1516/1-1) of the German Research Foundation (DFG).}}
\author{Martin Zimmermann}
\institute{Reactive Systems Group, Saarland University, 66123 Saarbr\"ucken, Germany\\
\email{zimmermann@react.uni-saarland.de}}
\begin{document}

\maketitle

\begin{abstract}
\noindent Delay games are two-player games of infinite duration in which one player may delay her moves to obtain a lookahead on her opponent's moves. We consider delay games with winning conditions expressed in weak monadic second order logic with the unbounding quantifier (WMSO$+$U), which is able to express (un)boundedness properties. It is decidable whether the delaying player is able to win such a game with bounded lookahead, i.e., if she only skips a finite number of moves.

However, bounded lookahead is not always sufficient: we present a game that can be won with unbounded lookahead, but not with bounded lookahead. Then, we consider WMSO$+$U delay games with unbounded lookahead and show that the exact evolution of the lookahead is irrelevant: the winner is always the same, as long as the initial lookahead is large enough and the lookahead tends to infinity. 

\end{abstract}

\section{Introduction}
\label{sec_intro}
Many of today's problems in computer science are no longer concerned with
programs that transform data and then terminate, but with non-terminating
reactive systems which have to interact with a possibly antagonistic
environment for an unbounded amount of time. The framework of infinite
two-player games is a powerful and flexible tool to verify and synthesize such
systems. The seminal theorem of Büchi and Landweber~\cite{BuechiLandweber69}
states that the winner of an infinite game on a finite arena with an $\omega$-regular
winning condition can be determined and a corresponding finite-state winning
strategy can be constructed effectively.

Ever since, this result was extended along different dimensions, e.g., the
number of players, the type of arena, the type of winning condition, the type
of interaction between the players (alternation or concurrency), zero-sum or
non-zero-sum, and complete or incomplete information. In this work, we consider
two of these dimensions, namely more expressive winning conditions and the
possibility for one player to delay her moves.

\paragraph{WMSO$+$U} Recall that the $\omega$-regular languages are exactly
those that are definable in monadic second order logic (MSO)~\cite{Buechi62}. Recently, Bojańczyk has started a
program~\cite{Bojanczyk04,Bojanczyk11,Bojanczyk14,BojanczykColcombet06,BojanczykGMS14,BPT15,BojanczykTorunczyk09,BojanczykTorunczyk12,HummelSkrzypczak12} 
investigating the logic MSO$+$U, MSO extended with the unbounding
quantifier~$\U$. A formula~$\U X \phi(X)$ is satisfied, if there are
arbitrarily large \emph{finite} sets~$X$ such that $\phi(X)$ holds. MSO$+$U is
able to express all $\omega$-regular languages as well as non-regular ones such as $L = \set{ a^{n_0} b a^{n_1} b
a^{n_2} b \cdots \mid \limsup\nolimits_i n_i = \infty }$ .
Decidability of MSO$+$U remained an open problem until recently: satisfiability of MSO$+$U on infinite words is undecidable~\cite{BPT15}. 

Even before this undecidability result was shown, much attention was being paid to fragments of the logic obtained by restricting the power of the second-order quantifiers. In particular, considering
weak\footnote{Here, the second-order quantifiers are restricted to finite
sets.} MSO with the unbounding quantifier (denoted by prepending a W) turned
out to be promising: WMSO$+$U on infinite words~\cite{Bojanczyk11} and on
infinite trees~\cite{BojanczykTorunczyk12} and WMSO$+$U with the path
quantifier (WMSO$+$UP) on infinite trees~\cite{Bojanczyk14} have equivalent automata models with decidable emptiness. Hence, these
logics are decidable.

For WMSO$+$U on infinite words, these automata are called max-automata,
deterministic automata with counters whose acceptance conditions are a boolean
combination of conditions~\myquot{counter~$c$ is bounded during the run}. While
processing the input, a counter may be incremented, reset to zero, or the
maximum of two counters may be assigned to it (hence the name max-automata). In
this work, we continue the investigation of delay games with winning conditions given by max-automata, so-called max-regular conditions. 

\paragraph{Delay Games} In such a delay game, one of the players can postpone her moves for some time,
thereby obtaining a lookahead on her opponent's moves. This allows her to
win some games which she loses without lookahead, e.g., if her first move
depends on the third move of her opponent. Nevertheless, there are
winning conditions that cannot be won with any finite lookahead, e.g., if her
first move depends on every move of her opponent. Delay
arises naturally when transmission of data in networks or components
with buffers are modeled.

From a more theoretical point of view,
uniformization of relations by continuous functions~\cite{Thomas11,DBLP:conf/rex/ThomasL93,trakhtenbrot1973finite} can be expressed and
analyzed using delay games. We consider games in which two players pick letters from alphabets~$\SigmaI$
and $\SigmaO$, respectively, thereby producing $\alpha \in \SigmaI^\omega$
and $\beta \in \SigmaO^\omega$. Thus, a strategy for the second player induces a mapping
$\tau\colon \SigmaI^\omega\rightarrow\SigmaO^\omega$. It is winning for the
second player if $(\alpha,\tau(\alpha))$ is contained in the winning
condition~$L\subseteq \SigmaI^\omega\times \SigmaO^\omega$ for every $\alpha$. Then, we say that $\tau$ uniformizes~$L$. 

In the classical setting of infinite games, in
which the players pick letters in alternation, the $n$-th letter of
$\tau(\alpha)$ depends only on the first $n$ letters of $\alpha$, i.e., $\tau$ satisfies a very strong notion of continuity. A strategy
with bounded lookahead, i.e., only finitely many
moves are postponed, induces a Lipschitz-continuous function $\tau$ (in the
Cantor topology on $\Sigma^\omega$) and a strategy with arbitrary lookahead
induces a continuous function (or equivalently, a uniformly continuous
function, as $\Sigma^\omega$ is compact).

Hosch and Landweber proved that it is decidable whether a game with $\omega$-regular
winning condition can be won with bounded lookahead~\cite{HoschLandweber72}. This result was improved by
Holtmann, Kaiser, and Thomas who showed that if a
player wins a game with arbitrary lookahead, then she wins already with
doubly-exponential bounded lookahead, and gave a streamlined decidability
proof yielding an algorithm with doubly-exponential running time~\cite{HoltmannKaiserThomas12}. Again, these
results were improved by giving an exponential upper bound on the necessary
lookahead and showing $\exptime$-completeness of the solution problem~\cite{KleinZimmermann15}. Going beyond $\omega$-regular winning
conditions by considering context-free conditions leads to undecidability and
non-elementary lower bounds on the necessary lookahead, even for very weak
fragments~\cite{FridmanLoedingZimmermann11}. In contrast, studying delay games with WMSO$+$U turned out to be more fruitful~\cite{Zimmermann15}: the winner of such a game w.r.t.\ bounded lookahead is decidable, i.e., the Hosch-Landweber Theorem holds for max-regular conditions, too. 

Stated in terms of uniformization, Hosch and Landweber proved decidability of the
uniformization problem for $\omega$-regular relations by Lipschitz-continuous functions
and Holtmann et al.\ proved the equivalence of the existence of a
continuous uniformization function and the existence of a Lipschitz-continuous
uniformization function for $\omega$-regular relations. 

In another line of work, Carayol and Löding considered the case of finite words~\cite{CarayolLoeding12}, and Löding and Winter~\cite{LoedingWinter14} considered the case of finite trees, which are both decidable. However, the nonexistence of MSO-definable choice functions on the infinite
binary tree~\cite{CarayolLoeding07,GS83} implies that uniformization fails for such trees.

Another application of delay games concerns the existence of Wadge reductions between max-regular languages~\cite{CabessaDFM09}, which can be expressed as a max-regular delay game.

\paragraph{Our Contribution} 

Here, we continue the investigation of delay games with max-regular winning conditions, which was started by proving the analogue of the Hosch-Landweber Theorem~\cite{Zimmermann15}: the winner of a game w.r.t.\ bounded lookahead is decidable. In particular, we are interested in the analogue of the Holtmann-Kaiser-Thomas Theorem (is bounded lookahead sufficient?). Not surprisingly, our first result (which was already announced, but not proved, in~\cite{Zimmermann15}) shows that this does not hold: unbounded lookahead is more powerful when it comes to unboundedness conditions. 

We complement this by showing that the ability of Player~$O$ to win a max-regular delay game does not depend on the growth rate, only on the fact that it grows without bound and a sufficiently large initial lookahead. This is, to the best of our knowledge, the first such result and should be contrasted with the case of $\omega$-context-free winning conditions, for which a non-elementary growth rate might be necessary for Player~$O$ to win~\cite{FridmanLoedingZimmermann11}.

As the analogue of the Holtmann-Kaiser-Thomas Theorem fails, determining the winner of max-regular delay games with respect to arbitrary delay functions does not coincide with determining the winner with respect to bounded delay functions. Hence, we investigate the former problem: we give lower bounds on the complexity and discuss some obstacles one encounters when trying the extend the decidability proof for the bounded case and the undecidability proof for MSO$+$U satisfiability. 

\section{Preliminaries}
\label{sec_prel}
The set of non-negative integers is denoted by $\nats$. An alphabet $\Sigma$ is a non-empty finite set of letters, and $\Sigma^{*}$ ($\Sigma^n$, $\Sigma^{\omega}$) denotes the set of finite words (words of length $n$, infinite words) over $\Sigma$. The empty word is denoted by $\varepsilon$, the length of a finite word~$w$ by $|w|$. For $w\in \Sigma^{*}\cup\Sigma^{\omega}$ we write $w(n)$ for the $n$-th letter  of $w$. Given two infinite words $\alpha \in \SigmaI^\omega$ and $\beta  \in \SigmaO^\omega$ we write ${ \alpha \choose \beta}$ for the word ${\alpha(0) \choose \beta(0) } {\alpha(1) \choose \beta(1) }  {\alpha(2) \choose \beta(2) } \cdots  \in (\SigmaI \times \SigmaO)^\omega$. Analogously, we write ${x \choose y}$ for finite words $x$ and $y$, provided they are of equal length. Finally, the index of an equivalence relation~$\equiv$, i.e., the number of its equivalence classes, is denoted by $\idx{\equiv}$.

\paragraph*{Max-Automata}
\label{subsec_automata}
Given a finite set~$C$ of counters storing non-negative integers,
\[
\ops(C) = \set{\inc{c}, \reset{c},  \maxx{c}{c_0}{c_1} \mid c,c_0, c_1 \in C}
\]
is the set of counter operations over $C$. A counter valuation over $C$ is a mapping~$\val\colon C \rightarrow \nats$. By $\val \pi$ we denote the counter valuation that is obtained by applying a finite sequence~$\pi \in \ops(C)^*$ of counter operations to $\val$, which is defined as implied by the operations' names.

A max-automaton~$\aut = (Q, C, \Sigma, q_I, \delta, \ell, \phi)$ consists of a finite set~$Q$ of states, a finite set~$C$ of counters, an input alphabet~$\Sigma$, an initial state~$q_I$, a (deterministic and complete) transition function~$\delta \colon Q \times \Sigma \rightarrow Q$, a transition labeling\footnote{Here, and later whenever convenient, we treat $\delta$ as relation~$\delta \subseteq Q \times \Sigma \times Q$.}~$\ell \colon \delta \rightarrow \ops(C)^*$ which labels each transition by a (possibly empty) sequence of counter operations, and an acceptance condition~$\phi$, which is a boolean formula over~$C$.

A run of $\aut$ on $\alpha \in \Sigma^\omega$ is an infinite sequence
\begin{equation}
\label{eq_run}
\rho = (q_0, \alpha(0), q_1) \,  (q_1, \alpha(1), q_2) \,  (q_2, \alpha(2), q_3) \cdots  \in \delta^\omega
\end{equation}
with $q_0 = q_I$. Runs on finite words are defined analogously, i.e., \[(q_0, \alpha(0), q_1) \cdots  (q_{n-1}, \alpha(n-1), q_n)\] is the run of $\aut$ on $\alpha(0) \cdots \alpha(n-1)$ starting in $q_0$. We say that this run ends with $q_n$. As $\delta$ is deterministic, $\aut$ has a unique run on every finite or infinite word.

Let $\rho$ be as in~(\ref{eq_run}) and define $\pi_n = \ell(q_n, \alpha(n), q_{n+1})$, i.e., $\pi_n$ is the label of the $n$-th transition of $\rho$. Given an initial counter valuation~$\val$ and a counter $c \in C$, we define the sequence~$ \rho_c = \val(c)\, , \, \val\pi_0(c)\, ,\,  \val\pi_0 \pi_1(c)\,  ,\,  \val\pi_0\pi_1\pi_2(c)\,,  \ldots  $
of counter values of $c$ reached on the run after applying \emph{all} operations of a transition label.
The run~$\rho$ of $\aut$ on $\alpha$ is accepting, if the acceptance condition~$\phi$ is satisfied by the variable valuation that maps a counter~$c$ to true if and only if $\limsup \rho_c$ is finite. Thus, $\phi$ can intuitively be understood as a boolean combination of conditions~\myquot{$\limsup \rho_c < \infty$}. Note that the limit superior of $\rho_c$ is independent of the initial valuation used to define $\rho_c$, which is the reason it is not part of the description of $\aut$. We denote the language accepted by $\aut$ by $L(\aut)$ and say that it is max-regular.

A parity condition (say min-parity) can be expressed in this framework using a counter for each color that is incremented every time this color is visited and employing the acceptance condition to check that the smallest color whose associated counter is unbounded, is even. Hence, the class of $\omega$-regular languages is contained in the class of max-regular languages.

\paragraph*{Delay Games}
\label{subsec_games}
A delay function is a mapping $f\colon\nats\rightarrow \nats \setminus \set{0}$, which is said to be bounded, if $f(i) = 1$ for almost all $i$. Otherwise, $f$ is unbounded. A special case of the bounded delay functions are the constant ones: delay functions~$f$ with $f(i) = 1$ for every $i > 0$.

Fix an input alphabet~$\SigmaI$ and an output alphabet~$\SigmaO$. Given a delay function~$f$ and an $\omega$-language $L\subseteq \left(\SigmaI\times\SigmaO\right)^\omega$, the game $\delaygame{L}$ is played by two players (Player~$I$ and Player~$O$) in rounds $i=0,1,2,\ldots$ as follows: in round $i$, Player~$I$ picks a word $u_i\in\SigmaI^{f(i)}$, then Player~$O$ picks one letter $v_i\in\SigmaO$. We refer to the sequence $(u_0,v_0),(u_1,v_1),(u_2,v_2),\ldots$ as a play of $\delaygame{L}$. Player~$O$ wins the play if and only if the outcome~${u_0u_1u_2\cdots \choose v_0v_1v_2\cdots}$ is in $L$, otherwise Player~$I$ wins.

Given a delay function $f$, a strategy for Player~$I$ is a mapping $\stratI\colon \SigmaO^*\rightarrow \SigmaI^*$ such that $|\stratI(w)|=f(|w|)$, and a strategy for Player~$O$ is a mapping $\stratO\colon \SigmaI^*\rightarrow \SigmaO$. Consider a play $(u_0,v_0),(u_1,v_1),(u_2,v_2),\ldots$ of $\delaygame{L}$. Such a play is consistent with $\stratI$, if $u_{i}=\stratI(v_0\cdots v_{i-1})$ for every $i$; it is consistent with $\stratO$, if $v_i=\stratO(u_0\cdots u_i)$ for every $i$. A strategy $\tau$ for Player~$\p$ is winning for her, if every play that is consistent with $\tau$ is won by Player~$\p$. In this case, we say Player~$p$ wins $\delaygame{L}$. 

Given a max-automaton~$\aut$, we want to determine whether Player~$O$ has a winning strategy for $\delaygame{L(\aut)}$ for some $f$, and, if yes, what kind of $f$ is sufficient to win. Note that due to monotonicity, Player~$O$ wins a delay game for an arbitrary winning condition w.r.t.\ a bounded delay function if and only if she wins the game w.r.t.\ a constant delay function. 

\section{Bounded Lookahead is not Always Sufficient}
\label{sec_bounded}
The winner of a max-regular delay game w.r.t.\ constant delay functions can be determined effectively by a reduction to delay-free games with max-regular winning conditions, i.e., the following problem is decidable~\cite{Zimmermann15}: given a max-automaton~$\aut$, does Player~$O$ win $\delaygame{L(\aut)}$ for some constant (equivalently, bounded) delay function~$f$? However, in this section, we show that bounded and thus constant lookahead does not suffice to win every delay game that Player~$O$ can win with arbitrary lookahead.

\begin{theorem}
\label{thm_unboundednecessary}
There is a max-regular language $L$ such that Player~$O$ wins $\delaygame{L}$ for every unbounded $f$, but not for any bounded $f$. 	
\end{theorem}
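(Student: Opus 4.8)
The plan is to design the winning condition $L$ so that, on the one hand, Player~$O$ must produce an output that encodes an infinite, \emph{unbounded} family of ``counter challenges'' posed by Player~$I$, so that any fixed finite lookahead is insufficient to answer all of them; but on the other hand, as long as the lookahead tends to infinity, Player~$O$ can always eventually see enough of Player~$I$'s moves to answer the current challenge. A natural template is to take $\SigmaI = \{a, b, \blank\}$ (or similar) and let Player~$I$'s word be of the form $a^{n_0} b a^{n_1} b a^{n_2} b \cdots$, i.e., a sequence of blocks of $a$'s separated by $b$'s, together with a designated ``padding'' or ``request'' symbol; Player~$O$'s job is, in each block, to output a symbol that correctly reports some property of $n_i$ that she can only know after she has seen the $b$ terminating that block. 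The key is to phrase this property so that (i) it is max-regular to check (a boolean combination of ``counter $c$ bounded'' conditions), (ii) it genuinely requires lookahead proportional to $n_i$, and (iii) the $n_i$ can be made to grow without bound, forcing unbounded lookahead, while any unbounded $f$ still lets $O$ win because her lookahead eventually dominates the current $n_i$.

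Concretely, I would have Player~$O$'s alphabet be $\SigmaO = \{0,1,\blank\}$ and require that in the $i$-th block Player~$O$ commits, \emph{before the block ends}, to a bit $v_i$ that is correct iff $v_i$ equals (say) the parity of $n_i$, or more robustly iff $v_i = 1 \Leftrightarrow n_i > n_{i-1}$ — anything whose value is not determined until the block is complete. The winning condition $L$ then says: the outputs are correctly synchronized with the block structure (an $\omega$-regular, hence max-regular, condition enforced with a counter that must stay bounded), and all committed bits are correct (again checkable by a max-automaton that compares block lengths using counters of the max-automaton and an acceptance condition forcing an ``error counter'' to remain bounded). To make unbounded lookahead necessary, I also need Player~$I$ to be \emph{able} to force the $n_i$ to grow: add to $L$ a disjunct declaring Player~$I$ the loser unless $\limsup_i n_i = \infty$ is witnessed — i.e., encode into $L$ a $\U$-style requirement so that Player~$I$'s only way not to lose outright is to keep increasing the block lengths, which is exactly where WMSO$+$U (equivalently, max-automata) buys power over $\omega$-regular conditions.

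For the two directions: \emph{Bounded $f$ fails.} Suppose $f$ is bounded, so $f(i) = 1$ for all $i \geq i_0$ and the total lookahead is some constant $d$. Player~$I$ plays so that after round $i_0$ every block has length $> d$; then at the round in which Player~$O$ must commit $v_i$, she has seen only the first $d$ letters past her commitment point, which is strictly inside the current block, so she cannot yet know the quantity $n_i$ depends on, and Player~$I$ — who moves with a delay advantage — can still choose the block length to falsify whichever bit $O$ committed. Hence Player~$I$ wins $\delaygame{L}$. \emph{Unbounded $f$ wins.} If $f$ is unbounded, then $\sum_{j \le i} f(j) \to \infty$, so Player~$O$'s lookahead grows without bound; she plays a simple strategy that stalls (outputs $\blank$) until she has seen the $b$ closing the current block, and only then commits the correct bit. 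The arithmetic shows that because the lookahead tends to infinity while she only needs to postpone her commitment by the length of the current block, she is always eventually able to ``catch up'': for every block, there is a round by which her accumulated lookahead exceeds the part of the block still unseen, so her commitment is correct. One must check that the stalling does not itself violate the synchronization part of $L$ — this is arranged by making $\blank$ a legal filler in $\SigmaO$ and letting the synchronization condition only constrain the non-$\blank$ outputs.

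\emph{Main obstacle.} The delicate point is the \emph{unbounded-$f$ wins} direction, and specifically making the three requirements on $L$ compatible: $L$ must simultaneously (a) force $I$ to blow up the block lengths (needing the $\U$-power), (b) force $O$ to commit inside each block (an $\omega$-regular timing constraint), and (c) be such that an unbounded-but-possibly-very-slowly-growing $f$ still suffices. Requirement (c) is the crux: I must argue that even when Player~$I$ shrinks the growth rate of his blocks to merely outpace any \emph{bounded} lookahead, he cannot outpace an \emph{arbitrary} unbounded one — because Player~$I$'s block-length choice in round $i$ is made with only $f$-bounded lookahead on $O$'s outputs, whereas $O$'s lookahead on $I$ diverges, so $O$ ``wins the race'' in the limit. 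Getting this interaction exactly right — and checking that the resulting $L$ is genuinely max-regular, i.e., expressible as a boolean combination of boundedness conditions on counters of a deterministic counter automaton — is where the real work lies; the two impossibility/strategy arguments above are then comparatively routine book-keeping.
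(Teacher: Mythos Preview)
Your proposal has a genuine gap in the design of $L$: requiring Player~$O$ to commit a bit that is \emph{correct for every block} cannot separate bounded from unbounded lookahead, regardless of where you place the commitment point.

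If the commitment is at the start of each block (as your ``bounded $f$ fails'' paragraph implicitly assumes), then Player~$I$ defeats Player~$O$ even for unbounded~$f$. Player~$I$ knows $f$, hence knows the lookahead $L_p = \sum_{j\le p}(f(j)-1)$ at every position~$p$. When he opens a block at position~$p$, he plays no terminator up to position $p+L_p$; only then does Player~$O$ produce her bit at position~$p$, after which Player~$I$ extends the block so as to falsify that bit (choosing parity, or ``$n_i > n_{i-1}$'', or whatever your predicate is). This works for every block, so Player~$I$ wins. Your sentence ``Player~$I$'s block-length choice in round~$i$ is made with only $f$-bounded lookahead on $O$'s outputs'' misdiagnoses the situation: Player~$I$ needs no information about $O$ to make the block long enough; he only needs to know~$f$, and he reacts to $O$'s bit \emph{after} it is placed.

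Conversely, if you allow the commitment anywhere inside the block (as your ``unbounded $f$ wins'' paragraph assumes, with the stalling strategy), then Player~$O$ wins already with constant lookahead~$1$: at the last position inside the block she sees the terminator one step ahead and commits correctly. So your ``bounded $f$ fails'' argument breaks. The two halves of your proof use incompatible readings of where the commitment lives.

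The missing idea, which the paper supplies, is to make Player~$O$'s obligation itself an \emph{unboundedness} condition rather than a safety condition. Concretely, the paper lets an \emph{output block} be a segment starting at a $\sepp$ in which Player~$O$'s first output letter equals Player~$I$'s input letter at the position where she closes the block; $L$ then says: if Player~$I$ produces infinitely many $\sepp$'s and arbitrarily long input blocks, then there are arbitrarily long output blocks. Player~$O$ need not be correct at every $\sepp$; she only needs the lengths of the output blocks she \emph{does} get right to be unbounded. With unbounded $f$ she achieves this by, at each $\sepp$, matching the farthest input letter in her current lookahead---since both the lookahead and the input-block lengths tend to infinity, so do her output blocks. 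With bounded~$f$, every output block has length at most the constant lookahead, so Player~$I$ wins by the adaptive argument you already sketched. This ``$\limsup$ on both sides'' shape is what makes the construction go through.
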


\begin{proof}
Let $\SigmaI = \set{0,1, \sepp}$ and $\SigmaO = \set{0,1,\blank}$. An input block is a word $\sepp w $ with $w \in \set{0,1}^+$. An output block is a word~$
{\sepp \choose \alpha(n)} 
{\alpha(1) \choose \blank} 
{\alpha(2) \choose \blank} 
\cdots
{\alpha(n-1) \choose \blank} 
{\alpha(n) \choose \alpha(n)}  \in (\SigmaI \times \SigmaO)^+$
with $\alpha(j) \in \set{0,1}$ for all $j$ in the range $1 \le j \le n$. Note that the first and last letter in an output block are the only ones whose second component is not an $\blank$, and that these letters have to be equal to the first component of the block's last letter. Also, note that neither an input block nor an output block has to be maximal in the sense that it has to end with a $\sepp$ (in the first component). Every input block of length $n$ can be extended to an output block of length $n$ and  projecting an output block to its first components yields an input~block.

Let  $L \subseteq  (\SigmaI \times \SigmaO)^\omega$ be the language of words ${\alpha \choose \beta}$ satisfying the following property: if $\alpha$ contains infinitely many $\sepp$ and arbitrarily long input blocks, then ${\alpha \choose \beta}$
contains arbitrarily long output blocks.
It is easy to come up with a WMSO$+$U formula defining $L$ by formalizing the definitions of input and output blocks in first-order logic. 

Now, consider $L$ as winning condition for a delay game. Intuitively, Player~$O$ has to specify arbitrarily long output blocks, provided Player~$I$ produces arbitrarily long input blocks. The challenge for Player~$O$ is that she has to specify at the beginning of every output block whether she ends the block in a position where Player~$I$ has picked a $0$ or a $1$. 


First, consider $\delaygame{L}$ for an unbounded delay function~$f$. The following strategy is winning for Player~$O$: whenever she has to pick $\beta(i)$ at a position where Player~$I$ picked $\alpha(i) = \sepp$, she picks the last letter of the longest input block in the lookahead that starts with the current $\sepp$. Then, she completes the output block by picking $\blank$ until the end of the input block, where she copies $\beta(i)$, which completes the output block. At every other position, she picks an arbitrary letter. Now, consider a play consistent with this strategy: if Player~$I$ picks infinitely many $\sepp$ and arbitrarily large input blocks, then Player~$O$ will see arbitrarily large input blocks in her lookahead, i.e., her strategy picks arbitrarily large output blocks. Thus, the strategy is indeed winning.

It remains to show that Player~$I$ wins $\delaygame{L}$ for every bounded delay function~$f$. Fix such a function and define $\ell = \sum_{i\colon f(i)>0}f(i)-1$, i.e., $\ell$ is the maximal lookahead size Player~$O$ will achieve. Player~$I$ produces longer and longer input blocks of the following form: he starts picking $\sepp$ followed by $0$'s until Player~$O$ has picked an answer at the position of the last $\sepp$. If she picked a $0$, then Player~$I$ finishes the input block by picking $1$'s; if she picked a $1$ (or an $\blank$), then he finishes the input block by picking $0$'s. Thus, the length of every output block is at most $\ell$, since Player~$O$ has to determine the answer to every $\sepp$ after seeing at most the next $\ell$ letters picked by Player~$I$. Thus, Player~$I$ picks infinitely many $\sepp$ and arbitrarily long input blocks, while the length of the output blocks is bounded. Hence, the strategy is winning for Player~$I$.

\end{proof}

\section{Any Unbounded Lookahead is Sufficient}
\label{sec_unbounded}
In this section, we complement the result of the previous section, showing that bounded lookahead is not always sufficient for max-regular delay games, by showing that any unbounded lookahead is sufficient for Player~$O$, provided some lookahead allows her to win at all. If Player~$O$ wins a game with respect to some delay function~$f$, then she also wins with respect to every $f'$ that grants her at every round larger lookahead\footnote{This holds for every winning condition, not only max-regular ones.}. The hard part of the proof is to show that she also wins for \emph{smaller} functions~$f'$ that grant her less lookahead. 

To this end, in Subsection~\ref{subsec_equiv}, we introduce equivalence relations that capture the behavior of a max-automaton up to a certain precision. Then, in Subsection~\ref{subsec_equivgame}, we define an infinite-state game~$\game$ based on these equivalence relations. Intuitively, the players' pick equivalence classes and Player~$I$ is in charge of increasing the precision of the approximation of the automaton's behavior, i.e., there is no explicit delay function in the definition of $\game$. This game allows to prove that smaller, but unbounded, lookahead is also sufficient. 

\subsection{Equivalence Relations for Max-Automata}
\label{subsec_equiv}
Fix $\aut = (Q, C, \Sigma, q_I, \delta, \ell, \phi)$. We generalize notions introduced in \cite{Bojanczyk11} and \cite{Zimmermann15} to define equivalences over sequences of counter operations and over words over $\Sigma$ to capture the behavior of $\aut$ up to a given precision. To this end, we need to introduce some notation to deal with runs of $\aut$. Given a state~$q$ and $w \in \Sigma^* \cup \Sigma^\omega$, let $\run(q, w)$ be the run of $\aut$ on $w$ starting in $q$. If $w$ is finite, then $\delta^*(q, w)$ denotes the state $\run(q,w)$ ends with. The transition profile of $w \in \Sigma^*$ is the mapping~$q \mapsto \delta^*(q,w)$. 

Now, we define inductively what it means for a sequence~$\pi \in \ops(C)^*$ to transfer a counter~$c$ to a counter~$d$. The empty sequence and the operation~$\inc{c}$ transfer every counter to itself. The operation~$\reset{c}$ transfers every counter~$c' \neq c$ to itself and the operation~$\maxx{c}{c_0}{c_1}$ transfers every counter but $c$ to itself and transfers $c_0$ and $c_1$ to $c$. Finally, if $\pi_0$ transfers $c$ to $e$ and $\pi_1$ transfers $e$ to $d$, then $\pi_0\pi_1$ transfers $c$ to $d$. If $\pi$ transfers $c$ to $d$, then we have $\val \pi (d) \ge \val(c)$ for every counter valuation~$\val$, i.e., the value of $d$ after executing $\pi$ is larger or equal to the value of $c$ before executing $\pi$, independently of the initial counter values.

Furthermore, a sequence of counter operations~$\pi$ transfers $c$ to $d$ with $m \ge 0$ increments, if there are counters~$e_1, \ldots, e_m$ and a decomposition
\[\pi = \pi_0\,(\inc{e_1}) \,\pi_1 \,(\inc{e_2}) \,\pi_2\,\cdots\, \pi_{m-1} \,(\inc{e_m}) \,\pi_{m}\] of $\pi$ such that $\pi_0$ transfers $c$ to $e_1$, $\pi_{j}$ transfers $e_j$ to $e_{j+1}$ for every $j$ in the range~$1 \le j < m$, and $\pi_m$ transfers $e_m$ to $d$. If $\pi$ transfers $c$ to $d$ with $m$ increments, then we have $\val \pi (d) \ge \val(c) + m$ for every counter valuation~$\val$. Also note that if $\pi$ transfers $c$ to $d$ with $m>0$ increments, then it also transfers $c$ to $d$ with $m'$ increments for every $m' \le m$. Finally, we say that $\pi$ is a $c$-trace of length~$m$, if there is a counter~$c'$ such that $\pi$ transfers $c'$ to $c$ with $m$ increments. Thus, if $\pi$ is a $c$-trace of length~$m$, then $\val \pi (c) \ge m$ for every valuation~$\val$. 

As only counter values reached after executing all counter operations of a transition label are considered in the semantics of max-automata, we treat $\lbls = \set{\ell(q,a,q') \mid (q, a, q') \in \delta}$ as an alphabet. Every word $\lambda \in \lbls^*$ can be flattened to a word in $\ops(C)^*$, which is denoted by $\flatten(\lambda)$. However, infixes, prefixes, or suffixes of $\lambda$ are defined with respect to the alphabet $\lbls$. We define $\ell(q, w) \in \lbls^*$ to be the sequence of elements in $\lbls$ labeling the run~$\run(q, w)$.

Let $\rho$ be a finite run of $\aut$ and let $\pi \in \ops(C)^*$. We say that $\rho$ ends with $\pi$, if $\pi$ is a suffix of $\flatten(\ell(\rho))$. A finite or infinite run contains $\pi$, if it has a prefix that ends with $\pi$.

\begin{lemma}[\cite{Bojanczyk11}]
\label{lemma_ctraceswitnessunboundedness}
Let $\rho$ be a run of $\aut$ and $c$ a counter. Then, $\limsup \rho_c = \infty$ if and only if $\rho$ contains arbitrarily long $c$-traces.
\end{lemma}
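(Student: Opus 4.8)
The plan is to prove both directions of the equivalence by unpacking the definition of $c$-traces and relating them to the sequence $\rho_c$ of counter values.

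\medskip

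\noindent\textbf{From arbitrarily long $c$-traces to unboundedness.}
First I would show the easy direction: if $\rho$ contains arbitrarily long $c$-traces, then $\limsup\rho_c = \infty$. Fix $m$; by assumption there is a prefix of $\rho$ that ends with some $\pi\in\ops(C)^*$ which is a $c$-trace of length $m$. By the observation already recorded in the excerpt, $\val'\pi(c)\ge m$ for \emph{every} counter valuation $\val'$; applying this with $\val'$ the valuation reached just before the occurrence of $\pi$, we get that some entry of $\rho_c$ is at least $m$. Since $m$ was arbitrary, $\limsup\rho_c = \infty$. (One has to be slightly careful that $\pi$ occurs as a suffix of a \emph{flattened} label prefix, so the value $\val'\pi(c)$ is exactly one of the values appearing in the sequence $\rho_c$, which is defined precisely as the values reached after applying all operations of each transition label. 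A prefix of $\rho$ whose flattened label sequence ends with $\pi$ has $\pi$ as a tail segment, but the value $\val'\pi(c)$ is still bounded below by $m$ regardless.)

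\medskip

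\noindent\textbf{From unboundedness to arbitrarily long $c$-traces.}
This is the harder, more interesting direction, and I expect it to be the main obstacle. Suppose $\limsup\rho_c=\infty$; fix a target length $m$ and a position $n$ in the run where $\rho_c(n)\ge m$ — and, to be safe, choose $n$ so that $\rho_c(n)$ exceeds $m$ plus the number of counters, or even iterate the argument. The idea is to trace backwards from counter $c$ at position $n$ through the run, following where each unit of ``mass'' in the value of $c$ came from. Because the only operations are increment, reset, and $\maxx{}{}{}$, every unit of the current value of $c$ was either deposited by an $\inc{}$ at some earlier point, or inherited via a $\max$ (possibly through many $\max$ steps and $\inc$ steps, exactly matching the inductive definition of ``$\pi$ transfers $c'$ to $c$''), and a $\reset{}$ kills everything accumulated so far in that counter. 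So a value of at least $m$ at counter $c$ at position $n$ forces the existence of a chain of counters $e_1,\dots,e_m$ and a factorization of the label sequence of some prefix of $\rho$ witnessing that this prefix ends with a $c$-trace of length $m$: the $j$-th increment in the witnessing decomposition is the increment that ``put'' the $j$-th unit into the relevant counter, and the $\pi_j$'s transfer the counter across the intervening steps without any intervening reset of the live counter. The cleanest way to organize this is probably an induction on $n$ (the length of the run prefix), maintaining the invariant that for every counter $d$ and every $k\le\rho_d(n)$, the prefix of length $n$ ends with a $d$-trace of length $k$; one checks this is preserved by each of the three operation types (increment adds one to the trace length via a new $\inc e$ with $e=d$; $\max$ combines the traces of the two source counters, taking the longer; reset truncates to $0$). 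Given the invariant, taking $d=c$ and $k=m$ at a position $n$ with $\rho_c(n)\ge m$ finishes the proof.

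\medskip

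\noindent\textbf{Remarks on pitfalls.}
Two technical points deserve care. First, the $c$-traces live in $\ops(C)^*$ while runs are naturally labeled over $\lbls$; one must remember that ``$\rho$ contains $\pi$'' is defined via $\flatten$ of the label prefix, so the inductive step that processes one transition at a time actually processes a whole block $\ell(q,a,q')\in\lbls$, i.e.\ a word in $\ops(C)^*$ rather than a single operation — but since ``transfers with $m$ increments'' is closed under concatenation (as stated in the excerpt), processing a block is just iterating the single-operation analysis and poses no real difficulty. Second, the definition of $\rho_c$ in the semantics records values only after \emph{all} operations of a transition label, whereas the backward-tracing argument may land on an increment in the middle of a block; this is harmless because ``$\pi$ transfers $c'$ to $c$ with $m$ increments'' implies it does so with any $m'\le m$, so we can always pad to a block boundary without losing trace length. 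With these observations the two directions go through, and the statement matches the one already established for the counter semantics of max-automata in \cite{Bojanczyk11}.
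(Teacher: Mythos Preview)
The paper does not prove this lemma: it is imported from \cite{Bojanczyk11} and stated without proof, so there is no in-paper argument to compare your proposal against.

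That said, your proof is the natural one and is essentially correct. One point to make explicit in the backward direction: the invariant ``for every counter $d$ and every $k\le\rho_d(n)$, the length-$n$ prefix ends with a $d$-trace of length $k$'' fails at $n=0$ unless the initial valuation is zero (the empty sequence is only a $d$-trace of length~$0$). This is harmless---the paper already observes that $\limsup\rho_c$ is independent of the initial valuation, so you may fix $\val\equiv 0$---but it should be stated. With that in place, your operation-by-operation case analysis (increment on $d$ extends a $d$-trace by one; reset on $d$ collapses to the empty $d$-trace; max inherits the longer of the two source traces; any other operation transfers $d$ to $d$ and preserves trace length) goes through, and your remarks about block boundaries versus individual operations correctly handle the mismatch between $\lbls$ and $\ops(C)$.
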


We use the notions of transfer (with increment) to define the equivalence relations that capture $\aut$'s behavior. Fix some $m\ge 0$. We say that $\lambda, \lambda' \in \lbls^*$ are $m$-equivalent, denoted by $\lambda \eqoppar{m} \lambda'$, if for all counters~$c$ and $d$ and for all $m'$ in the range $0 \le m' \le m$: 
\begin{enumerate}

	\item $\lambda$ has an infix whose flattening has a suffix that is a $c$-trace of length $m'$ if and only if $\lambda'$ has an infix whose flattening has a suffix that is a $c$-trace of length $m'$, 

	\item the flattening of $\lambda$ has a suffix that is a $c$-trace of length~$m'$ if and only if the flattening of $\lambda'$ has a suffix that is a $c$-trace of length~$m'$,

	\item the flattening of $\lambda$ transfers $c$ to $d$ with $m'$ increments if and only if the flattening of $\lambda'$ transfers $c$ to $d$ with $m'$ increments, and

	\item $\lambda$ has a prefix whose flattening transfers $c$ to $d$ with $m'$ increments if and only if $\lambda'$ has a prefix whose flattening transfers $c$ to $d$ with $m'$ increments.

\end{enumerate}
Using this, we define two words $x,x' \in \Sigma^*$ to be $m$-equivalent, denoted by $x \eqwordpar{m} x'$, if they have the same transition profile and if $\ell(q, x)\eqoppar{m} \ell(q, x')$ for all states $q$. 

Recall that a congruence is an equivalence relation~$\equiv$ over $\Sigma^*$ such that $x \equiv y$ implies $xz \equiv yz$ for every $z \in \Sigma^*$.

\begin{lemma}
\label{lemma_eqproperties}
Let $\aut$ be a max-automaton with $n$ states and $k$ counters and let $m \in \nats$.
\begin{enumerate}
	
	\item $\lambda \eqoppar{m} \lambda'$ implies $\lambda \eqoppar{m'} \lambda'$ for every $m' \le m$.
 	
 	\item $x \eqwordpar{m} x'$ implies $x \eqwordpar{m'} x'$ for every $m' \le m$.
 	
 	\item\label{lemma_eqproperties_eqoppar_cong}
 	$\eqoppar{m}$ is a congruence.
 	
 	\item\label{lemma_eqproperties_eqwordpar_cong}
 	$\eqwordpar{m}$ is a congruence.
	
	\item\label{lemma_eqproperties_eqoppar_index}
	The index of $\eqoppar{m}$ is at most $2^{2(k^2+k)\log(m+2)}$.
	
	\item\label{lemma_eqproperties_eqwordpar_index}
	The index of $\eqwordpar{m}$ is at most $2^{n(\log(n) + 2(k^2+k)\log(m+2))}$.
\end{enumerate}
\end{lemma}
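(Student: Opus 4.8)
The plan is to prove the six items in order, each reducing to the previous ones.

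\paragraph{Monotonicity (items 1 and 2).} Item~1 is immediate from the definition of $\eqoppar{m}$: the defining conditions quantify over all $m'$ in the range $0 \le m' \le m$, so shrinking $m$ to $m' \le m$ simply drops some of the conjuncts, and the remaining ones are exactly those defining $\eqoppar{m'}$. Item~2 follows from item~1 together with the definition of $\eqwordpar{m}$: having the same transition profile does not depend on $m$, and $\ell(q,x) \eqoppar{m} \ell(q,x')$ implies $\ell(q,x) \eqoppar{m'} \ell(q,x')$ for every state~$q$ by item~1.

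\paragraph{Congruence (items 3 and 4).} For item~3, I would show $\lambda \eqoppar{m} \lambda'$ implies $\lambda\mu \eqoppar{m} \lambda'\mu$ for every $\mu \in \lbls^*$. The key observation is that each of the four defining conditions of $\eqoppar{m}$ is "positional" in a way that interacts well with concatenation. Writing $\flatten(\lambda\mu) = \flatten(\lambda)\flatten(\mu)$, an infix of $\lambda\mu$ is either an infix of $\lambda$, an infix of $\mu$, or straddles the boundary; in the straddling case it decomposes as (a suffix of $\lambda$)(a prefix of $\mu$). The transfer-with-increments relation composes: if a word's flattening transfers $c$ to $e$ with $i$ increments and another transfers $e$ to $d$ with $j$ increments, the concatenation transfers $c$ to $d$ with $i+j$ increments (and, by the remark in the text, with any smaller number too). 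Hence whether $\lambda\mu$ satisfies condition~(1) for a given $(c,d,m')$ is determined by: the condition~(1) data of $\lambda$, the condition~(1) and condition-(4)-style prefix data of $\mu$, and — for the straddling infixes — the suffix data of $\lambda$ (condition~2) combined with prefix data of $\mu$. All of the $\lambda$-side data needed is recorded in $\eqoppar{m}$ (conditions 1, 2, 3, 4 together with monotonicity in $m'$), so $\lambda \eqoppar{m} \lambda'$ forces the same outcome for $\lambda'\mu$. Conditions (2), (3), (4) for $\lambda\mu$ are handled the same way — (3) and (4) need the transfer/prefix-transfer data of $\lambda$ plus that of $\mu$; (2) needs the suffix-trace data of $\lambda$ and the transfer and suffix-trace data of $\mu$. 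I would organize this as: fix $\mu$, and for each condition and each parameter tuple express membership of $\lambda\mu$ as a boolean combination of ($\eqoppar{m}$-invariant data of $\lambda$) and (fixed data of $\mu$). Item~4 then follows: $\eqwordpar{m}$ refines the transition-profile congruence (transition profiles compose) intersected with, for each state~$q$, the preimage of $\eqoppar{m}$ under $x \mapsto \ell(q,x)$; since $\ell(q,xz) = \ell(q,x)\,\ell(\delta^*(q,x),z)$ and $x \eqwordpar{m} x'$ gives $\delta^*(q,x) = \delta^*(q,x')$, concatenating $z$ appends the \emph{same} label word on both sides, and item~3 finishes it.

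\paragraph{Index bounds (items 5 and 6).} For item~5, an $\eqoppar{m}$-class is determined by the truth values of finitely many predicates: conditions (1) and (2) each contribute, for every counter~$c$, the set of lengths $m' \le m$ that are achievable, but by the monotonicity remark ("transfers with $m$ increments $\Rightarrow$ with every $m' \le m$") the achievable set is always a downward-closed subset of $\{0,\dots,m\}$, i.e., determined by a single threshold in $\{0,1,\dots,m,\text{``}\ge m{+}1\text{''}\}$ — at most $m+2$ values, so $\log(m+2)$ bits per counter, $k$ counters each for (1) and (2). Conditions (3) and (4) each contribute, for every ordered pair $(c,d)$ of counters, again a downward-closed set of increment counts, hence one threshold out of $m+2$ values, i.e., $\log(m+2)$ bits per pair, and there are $k^2$ pairs, for each of (3) and (4). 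Summing the bit counts: $2k\log(m+2) + 2k^2\log(m+2) = 2(k^2+k)\log(m+2)$, giving the bound $2^{2(k^2+k)\log(m+2)}$. For item~6, an $\eqwordpar{m}$-class is given by a transition profile (a function $Q \to Q$, of which there are $n^n = 2^{n\log n}$) together with, for each of the $n$ states~$q$, the $\eqoppar{m}$-class of $\ell(q,x)$ (at most $2^{2(k^2+k)\log(m+2)}$ choices each by item~5). Multiplying: $2^{n\log n} \cdot \bigl(2^{2(k^2+k)\log(m+2)}\bigr)^n = 2^{n(\log n + 2(k^2+k)\log(m+2))}$, as claimed.

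\paragraph{Main obstacle.} The routine parts are items 1, 2, 5, 6 — pure bookkeeping once the downward-closedness remark is invoked. The real work is item~3 (whence item~4): one must carefully verify that the four conditions of $\eqoppar{m}$ record \emph{exactly} enough information about $\lambda$ to predict, after appending an arbitrary $\mu$, the same four conditions for the concatenation — in particular handling infixes/suffixes/prefixes that straddle the $\lambda$–$\mu$ boundary, where one needs that transfer-with-increments composes additively and that the "suffix is a $c$-trace" data of $\lambda$ (condition~2) combines with prefix data of $\mu$ to give straddling-infix $c$-traces. This case analysis is where all the definitions were set up (note conditions 1 and 4 are precisely the "infix" and "prefix" closures needed to make the argument go through), so it is conceptually the heart of the lemma even though no single step is deep.
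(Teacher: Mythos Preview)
Your proposal is correct and follows essentially the same approach as the paper. The only organizational difference is that for item~3 you append an arbitrary $\mu \in \lbls^*$ and argue directly, whereas the paper appends a single letter $\pi \in \lbls$ and inducts; both arguments hinge on the same decomposition (infix in $\lambda$, in the appended part, or straddling) and the same use of condition~(2) to handle the straddling case, and both dismiss the remaining three conditions as analogous.
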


\begin{proof}
The first two items follow trivially from the definition of $\eqoppar{m}$. Thus, we only consider the latter four items.

\ref{lemma_eqproperties_eqoppar_cong}. Let $\lambda \eqoppar{m} \lambda'$ and let $\pi \in \Lambda$ (note that we treat $\pi$ as a letter from $\lambda$, although it is also a sequence of counter operations). We show $\lambda \pi \eqoppar{m} \lambda' \pi$. An inductive application proves that $\eqoppar{m}$ is an equivalence. 

First, assume $\lambda \pi$ has an infix~$\lambda_0$ whose flattening has a suffix~$\pi_0$ that is a $c$-trace of length $m'$ for some $m' \le m$. If $\lambda_0$ is an infix of $\lambda$, then $\lambda \eqoppar{m} \lambda'$ implies that $\lambda'$ has an infix with the same property. The other trivial case is when $\lambda_0$ is equal to $\pi$. Thus, it remains to consider the case where $\lambda_0$ is a suffix of $\lambda \pi$ of length at least two (recall that we treat $\pi$ as one letter, i.e., $\lambda_0$ contains at least one letter from $\lambda$). Thus, $\lambda_0$ can be decomposed into two parts, one that is a $c'$-trace of length~$m_0$ and is a suffix of the flattening of $\lambda$, and another one that is equal to $\pi$ (treated as a sequence of counter operations now), which transfers $c'$ to $c$ with $m_1$ increments. Furthermore, we have $m_0 + m_1 = m' \le m$. 

Due to $\lambda \eqoppar{m} \lambda'$, we conclude that the flattening of $\lambda'$ has a suffix that is a $c'$-trace of length~$m_0$. Combining this suffix with $\pi$, we obtain a suffix of the flattening of $\lambda'\pi$ that is a $c$-trace of length~$m'$. This is also an infix of $\lambda'\pi$ whose flattening has a suffix that is a $c$-trace of length $m'$. 

The argument where $\lambda' \pi$ has such an infix is symmetric and the reasoning for the other three properties in the definition of $\eqoppar{m}$ is analogous. 

\ref{lemma_eqproperties_eqwordpar_cong}. Having the same transition profile is a congruence, since $\delta^*(q, xz) = \delta^*(\delta^*(q, x), z)$. This, and $\eqoppar{m}$ being a congruence imply that $\eqwordpar{m}$ is a congruence as well.

\ref{lemma_eqproperties_eqoppar_index}. An equivalence class of $\eqoppar{m}$ is uniquely characterized by the following properties:
\begin{itemize}
	\item for every counter~$c$, whether its elements have an infix whose flattening has a suffix that is a $c$-trace, and if yes by the largest~$m' \le m$ such that the length of such a $c$-trace is $m'$.
	\item For every counter~$c$, whether its elements have a suffix that is a $c$-trace, and if yes by the largest $m' \le m$ such that the length of such a trace is $m'$.
	\item For every pair $(c,d)$ of counters, whether the flattenings of its elements transfer $c$ to $d$, and if yes by the largest $m' \le m$ such that the transfer has $m'$ increments.
	\item For every pair~$(c,d)$ of counters, whether its elements have a prefix whose flattening transfers $c$ to $d$, and if yes by the largest $m' \le m$ such that the transfer has $m'$ increments.
\end{itemize}
Thus, an equivalence class is induced by two mappings from $C$ to $\set{\bot, 0, 1, \ldots, m}$ and two mappings from $C^2$ to $\set{\bot, 0, 1, \ldots, m}$, where $\bot$ encodes that no such trace or transfer exists. The number of quadruples of such mappings is bounded by $(m+2)^{2(k^2 +k)} = 2^{2(k^2+k)\log(m+2)}$.

\ref{lemma_eqproperties_eqwordpar_index}. An equivalence class of $\eqwordpar{m}$ is uniquely characterized by a transition profile and, for every state~$q$, by the $\eqoppar{m}$ equivalence class of the sequence of counter operations encountered along the run starting in $q$. Thus, the class is characterized by a mapping from $Q$ to pairs of a state and an $\eqoppar{m}$ class. Thus, the index of $\eqwordpar{m}$ is bounded by the number of such mappings, i.e., by
\begin{equation*} (n \cdot \idx{\eqoppar{m}})^n = 
2^{\log \left( 
(n  2^{2(k^2+k)\log(m+2)})^n
\right)} = 2^{n\left( \log(n) + 2(k^2+k)\log(m+2) \right)}.
 \tag*{\qed}
\end{equation*}
\end{proof}

Next, we show that we take any infinite word~$x_0 x_1 x_2 \cdots$ with $x_i \in \Sigma^*$ and replace each $x_i$ by an \emph{equivalent} $x_i'$ without changing membership in $L(\aut)$. To capture the evolution of the counters properly with the imprecise equivalence relations~$\eqwordpar{m}$, we require that the $x_i$ and the $x_i'$ are $\eqwordpar{m}$-equivalent for $m$ tending to infinity. Formally, a sequence~$(r_i)_{i \in \nats}$ of natural numbers is a (convergence) rate, if it is weakly increasing and unbounded, i.e., $r_i \le r_{i+1}$ for every $i$ and $\sup_i r_i = \infty$. 

\begin{lemma}
	\label{lemma_swapeqwords}
Let $(x_i)_{i \in \nats}$ and $(x_i')_{i \in \nats}$ be two sequences of words over $\Sigma^*$ and let $(r_i)_{i \in \nats}$ be a rate such that $x_i \eqwordpar{r_i} x_i'$ for all $i$. Then, $x = x_0 x_1 x_2 \cdots \in L(\aut)$ if and only if $x' = x_0' x_1' x_2' \cdots \in L(\aut)$.
\end{lemma}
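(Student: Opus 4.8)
The plan is to track, along the infinite run of $\aut$ on $x$ and on $x'$, which counters have unbounded limit superior, and to show the two runs agree on this by Lemma~\ref{lemma_ctraceswitnessunboundedness}. Since a counter~$c$ has $\limsup \rho_c = \infty$ if and only if $\run(q_I,x)$ contains arbitrarily long $c$-traces, it suffices to show that $\run(q_I,x)$ contains arbitrarily long $c$-traces if and only if $\run(q_I,x')$ does, for every counter~$c$; then $\phi$ is satisfied by the same valuation in both cases, so $x \in L(\aut) \iff x' \in L(\aut)$. First I would record the basic structural fact we get for free: because $\eqwordpar{r_i}$ refines transition profiles and the rates $r_i$ are at least, say, $r_0 \ge 0$, an easy induction using that $\eqwordpar{m}$ is a congruence (Lemma~\ref{lemma_eqproperties}) shows that for every finite prefix $x_0 \cdots x_{j-1}$ and $x_0' \cdots x_{j-1}'$ the automaton reaches the same state; hence the decomposition points of the run of $\aut$ on $x$ into blocks $x_0, x_1, \ldots$ correspond exactly to those of the run on $x'$ into $x_0', x_1', \ldots$, and the sequence of labels read on the $i$-th block is $\ell(q^{(i)}, x_i)$ versus $\ell(q^{(i)}, x_i')$ with $q^{(i)} = \delta^*(q_I, x_0\cdots x_{i-1})$; by definition of $\eqwordpar{r_i}$ these are $\eqoppar{r_i}$-equivalent.

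The heart of the argument is the forward direction (the backward direction is symmetric): assume $\run(q_I, x)$ contains $c$-traces of unbounded length, and fix a target length~$M$; I must produce a $c$-trace of length~$\ge M$ in $\run(q_I, x')$. Take a $c$-trace $\pi$ of length $M$ occurring in $\run(q_I,x)$: by definition of ``contains,'' some prefix of the run ends with $\pi$, i.e., $\pi$ is a suffix of the flattening of the labels read on $x_0 x_1 \cdots x_j$ for some~$j$, and $\pi$ transfers some $c'$ to $c$ with $M$ increments. Now $\pi$ straddles a bounded number of consecutive blocks $x_a, x_{a+1}, \ldots, x_j$; I would split $\pi$ accordingly into $\pi = \sigma_a \sigma_{a+1} \cdots \sigma_j$ where $\sigma_a$ is a \emph{suffix} of $\flatten(\ell(q^{(a)}, x_a))$, the middle pieces $\sigma_i$ are the \emph{full} flattenings $\flatten(\ell(q^{(i)}, x_i))$, and $\sigma_j$ is a \emph{prefix} of $\flatten(\ell(q^{(j)}, x_j))$. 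Accumulating increments along this decomposition, there are counters $c' = d_{a-1}, d_a, \ldots, d_j$ and numbers $m_a + m_{a+1} + \cdots + m_j = M$ such that $\sigma_i$ transfers $d_{i-1}$ to $d_i$ with $m_i$ increments. Here is the one technical point to be careful about: the rates $r_i$ are only eventually large, so for the finitely many blocks with $r_i < M$ we cannot invoke $\eqoppar{r_i}$ directly. But since $\sum m_i = M$ is fixed and the number of blocks touched by $\pi$ is bounded (a block of length $n$ contributes at most $|C| \cdot (\text{max label length}) \cdot n$ operations, and longer $\pi$'s simply start further to the right), by choosing $M$ large — more precisely, by the rate being unbounded and weakly increasing — we may assume $a$ is so large that $r_i \ge M \ge m_i$ for all $i \ge a$. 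Then for each touched block, $x_i \eqwordpar{r_i} x_i'$ gives $\ell(q^{(i)}, x_i) \eqoppar{r_i} \ell(q^{(i)}, x_i')$, and since $m_i \le r_i$ the relevant clause in the definition of $\eqoppar{r_i}$ (clause~2 for the suffix piece $\sigma_a$, clause~3 for the full middle pieces, clause~4 for the prefix piece $\sigma_j$) transfers the property to $x_i'$: we obtain $\sigma_a'$ a suffix of $\flatten(\ell(q^{(a)}, x_a'))$ that is a $d_{a-1}$-trace of length $m_a$, middle pieces whose flattenings transfer $d_{i-1}$ to $d_i$ with $m_i$ increments, and a prefix $\sigma_j'$ of $\flatten(\ell(q^{(j)}, x_j'))$ transferring $d_{j-1}$ to $d_j$ with $m_j$ increments. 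Concatenating, $\sigma_a' \sigma_{a+1}' \cdots \sigma_j'$ is a suffix of the flattening of the labels read on $x_0' \cdots x_j'$ (using that the states $q^{(i)}$ are the same on the primed side, from the first paragraph) that is a $c$-trace of length $m_a + \cdots + m_j = M$. Hence $\run(q_I, x')$ contains a $c$-trace of length $M$; as $M$ was arbitrary, it contains arbitrarily long $c$-traces.

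The main obstacle I anticipate is purely bookkeeping: carefully aligning the suffix/middle/prefix decomposition of a trace $\pi$ with the block structure, and verifying that ``$\pi$ is a $c$-trace of length $M$ contained in the run'' really does cut into the blocks using exactly clauses 2–4 of the definition of $\eqoppar{m}$ (the suffix clause at the left end, the transfer clause in the middle, the prefix clause at the right end), while clause~1 — the ``infix'' clause — is what one needs for the stronger statement that an \emph{individual} label block can be swapped, and is not strictly required here but makes the congruence bookkeeping in the first paragraph clean. The only genuinely non-formal step is the observation that because $M$ is fixed and the rate is unbounded, arbitrarily long traces must eventually occur in a region of blocks where all rates already exceed $M$, so the finitely many ``bad'' early blocks cause no trouble; I would state this as the key lemma and spend a sentence justifying it from weak monotonicity and unboundedness of $(r_i)_{i\in\nats}$, then let the rest follow by the symmetric argument for the converse.
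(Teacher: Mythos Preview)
Your approach mirrors the paper's: reduce via Lemma~\ref{lemma_ctraceswitnessunboundedness} to transferring arbitrarily long $c$-traces, observe that the block-boundary states agree (so $\ell(q^{(i)},x_i)\eqoppar{r_i}\ell(q^{(i)},x_i')$), locate a $c$-trace of length~$M$ in blocks whose rate already exceeds~$M$, and port it piecewise using the clauses that define~$\eqoppar{r_i}$. The paper also combines the middle blocks into one via the congruence property, whereas you keep them separate and apply clause~3 to each; both are fine.

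There is, however, a genuine gap in your claim that clause~1 ``is not strictly required here''. Your decomposition $\pi=\sigma_a\sigma_{a+1}\cdots\sigma_j$ with $\sigma_a$ a suffix and $\sigma_j$ a prefix only makes sense when $a<j$. When the whole trace lies inside a single block ($a=j$), the trace is a suffix of the flattening of a proper $\lbls$-prefix of $\ell(q^{(j)},x_j)$; this is covered by clause~1 (a prefix is an infix) and by none of clauses~2--4. You also cannot dodge this case: nothing prevents each $x_i$ from driving the automaton to increment $c$ many times and then reset it before the block ends, so that every long $c$-trace in $\rho$ is strictly interior to a single block and never reaches a block boundary. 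The paper accordingly splits into $i=i'$ (handled by clause~1) and $i<i'$ (handled by clauses~2--4); you need that case split.

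A smaller point: the justification ``longer $\pi$'s simply start further to the right'' is not correct as written---a very long trace can sit entirely inside a very long early block. The clean argument is that $\limsup\rho_c=\infty$ is a tail property: for any $i_0$ the suffix run starting at block~$i_0$ still has unbounded $c$-values (the limit superior is independent of the initial counter valuation), hence by Lemma~\ref{lemma_ctraceswitnessunboundedness} it contains a $c$-trace of length~$M$. Choosing $i_0$ with $r_i\ge M$ for all $i\ge i_0$ then gives the WLOG you want.
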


\begin{proof}
Let $\rho = (q_0, \alpha(0), q_1) (q_1, \alpha(1), q_2)  \cdots$  be the run of $\aut$ on $x$ and let $\rho' = (q_0', \alpha'(0), q_1') (q_1', \alpha'(1), q_2') \cdots$ be the run of $\aut$ on $x'$, i.e., $x = \alpha(0)\alpha(1)\alpha(2) \cdots$ and $x' = \alpha'(0)\alpha'(1)\alpha'(2) \cdots$. Furthermore, let $n_i = \size{x_0 \cdots x_{i-1}}$ and $n_i' = \size{x_0' \cdots x_{i-1}'}$. By definition of $\eqwordpar{m}$, we obtain $q_{n_i} = q_{n_i'}'$ for every $i \ge 0$. Finally, we have $\ell(q_{n_i},x_i) \eqoppar{r_i} \ell(q_{n_i'}',x_i')$ for every $i$, due to  $x_i \eqwordpar{r_i} x_i'$ and $q_{n_i} = q_{n_i'}'$.

We show that $\rho$ contains arbitrarily long $c$-traces if and only if $\rho'$ contains arbitrarily long $c$-traces. Due to Lemma~\ref{lemma_ctraceswitnessunboundedness}, this suffices to show that $\rho$ is accepting if and only if $\rho'$ is accepting. Furthermore, due to symmetry, it suffices to show one direction of the equivalence. Thus, assume $\rho$ contains arbitrarily long $c$-traces and pick $m \in \nats$ arbitrarily. We show the existence of a $c$-trace of length~$m$ contained in $\rho'$. 

To this end, fix a $c$-trace of length~$m$ in $\rho$. We can assume w.l.o.g.\ that the trace is contained in a run infix~$\run(q_{n_i}, x_i \cdots x_{i'})$ (which ends with $q_{n_{i'+1}}$) for some $i \le i'$ with $m \le r_i \le r_{i'}$. Furthermore, we assume w.l.o.g.\ that $i$ ($i'$) is maximal (minimal) with this property for the fixed trace.

If $i = i'$, then the complete $c$-trace is contained in $\run(q_{n_i}, x_i)$, i.e., $\ell(q_{n_i}, x_i)$ has an infix whose flattening has a suffix that is a $c$-trace of length $m \le r_i$. Thus, the first requirement in the definition of $\eqoppar{r_i}$ yields an infix of $\ell(q_{n_i'}', x_i')$ whose flattening has a suffix that is a $c$-trace of length $m$. Thus, $\rho'$ contains a $c$-trace of length~$m$.

If $i < i'$, then the maximality of $i$ and the minimality of $i'$ imply that there are counters $d_0, d_1$ and non-negative numbers~$m_0 + m_1 + m_2 = m$ such that 
\begin{itemize}
	\item the flattening of $\ell(q_{n_i}, x_i)$ has a suffix that is a $d_0$-trace of length~$m_0$,
	\item the flattening of $\ell(q_{n_i+1}, x_{i+1} \cdots x_{i'-1})$ transfers $d_0$ to $d_1$ with $m_1$ increments, and
	\item $\ell(q_{n_i'}, x_{i'})$ has a prefix whose flattening transfers $d_1$ to $c$ with $m_2$ increments.
\end{itemize}
The latter three requirements  in the definition of $\eqoppar{r_i}$ imply the existence of the same transfers and traces in $\ell(q_{n_i'}', x_i')$, $\ell(q_{n_i'+1}', x_{i+1}' \cdots x_{i'-1}')$, and $\ell(q_{n_{i'}'}', x_{i'}')$, respectively. Hence, $\rho'$ contains a $c$-trace of length~$m$.
\end{proof}

The $\eqwordpar{m}$ classes are regular and \textit{trackable} on-the-fly by a finite automaton~$\auttrack_m$ due to $\eqwordpar{m}$ being a congruence.

\begin{lemma}
\label{lemma_eqclasstracker}
There is a deterministic finite automaton~$\auttrack_m$ with set of states~${\Sigma\quotientwordpar{m}}$ such that the run of $\auttrack_m$ on $w \in \Sigma^*$ ends with state~$[w]_{\eqwordpar{m}}$.
\end{lemma}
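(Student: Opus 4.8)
The plan is to realize $\auttrack_m$ as the canonical transition system on the congruence classes of $\eqwordpar{m}$, exactly as in the Myhill--Nerode construction. Concretely, I would take as state set $\Sigma^*\quotientwordpar{m}$ --- which is finite by Lemma~\ref{lemma_eqproperties}.\ref{lemma_eqproperties_eqwordpar_index} --- as initial state the class $[\varepsilon]_{\eqwordpar{m}}$, and as transition function $\delta_m$ the map defined by $\delta_m([x]_{\eqwordpar{m}}, a) = [xa]_{\eqwordpar{m}}$ for $x\in\Sigma^*$ and $a\in\Sigma$. Since the statement only concerns the state reached after reading $w$, no set of accepting states needs to be specified.

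The first point to verify is that $\delta_m$ is well defined, i.e., that $[xa]_{\eqwordpar{m}}$ does not depend on the choice of the representative $x$ of the class $[x]_{\eqwordpar{m}}$. This is precisely where the congruence property is used: by Lemma~\ref{lemma_eqproperties}.\ref{lemma_eqproperties_eqwordpar_cong}, $x\eqwordpar{m}x'$ implies $xa\eqwordpar{m}x'a$, hence $[xa]_{\eqwordpar{m}}=[x'a]_{\eqwordpar{m}}$, so the definition of $\delta_m$ is sound. Together with the finiteness of the state set, this makes $\auttrack_m$ a genuine deterministic finite automaton.

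It then remains to show, by a straightforward induction on $|w|$, that the run of $\auttrack_m$ on $w$ ends in state $[w]_{\eqwordpar{m}}$. The base case $w=\varepsilon$ is the choice of initial state. For $w=xa$ with $x\in\Sigma^*$ and $a\in\Sigma$, the induction hypothesis puts the run on $x$ into state $[x]_{\eqwordpar{m}}$, and reading the letter $a$ then leads to $\delta_m([x]_{\eqwordpar{m}},a)=[xa]_{\eqwordpar{m}}=[w]_{\eqwordpar{m}}$, as required.

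There is no genuine obstacle here: the lemma is an immediate consequence of the two facts already established in Lemma~\ref{lemma_eqproperties}, namely that $\eqwordpar{m}$ is a congruence (giving well-definedness of $\delta_m$) and that it has finite index (so that the state set is finite). The only care needed is to invoke the congruence property at the right place; everything else is the routine bookkeeping of the standard construction.
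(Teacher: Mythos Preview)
Your proposal is correct and matches the paper's proof essentially verbatim: the paper defines $\auttrack_m$ with state set $\Sigma^*\quotientwordpar{m}$, initial state $\eqclasswordpar{\varepsilon}{m}$, transition $\delta_{\auttrack_m}(\eqclasswordpar{x}{m},a)=\eqclasswordpar{xa}{m}$ (well-defined by the congruence property), and concludes by the same induction on $|w|$.
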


\begin{proof}
Define $\auttrack_m = (\Sigma\quotientwordpar{m}, \Sigma, \eqclasswordpar{\epsilon}{m}, \delta_{\auttrack_m}, \emptyset)$ where 
$\delta_{\auttrack_m}(\eqclasswordpar{x}{m}, a) = \eqclasswordpar{xa}{m}$,
which is independent of the representative~$x$ and based on the fact that $\eqwordpar{m}$ is a congruence. A straightforward induction over $|w|$ shows that $\auttrack_m$ has the desired properties.
\end{proof}

In particular, every $\eqwordpar{m}$ equivalence class is regular and recognized by the DFA obtained from $\auttrack_m$ by making the class to be recognized the only final state.

For the remainder of this section, we assume $\Sigma = \SigmaI \times \SigmaO$. We denote the projection of $\SigmaI \times \SigmaO$ to $\SigmaI$ by $\proj{\cdot}$, an operation we lift to words and languages over $\SigmaI \times \SigmaO$ in the usual way. Now, for each equivalence relation~$\eqwordpar{m}$ over $(\SigmaI \times \SigmaO)^*$ we define its projection\footnote{The notation~$\eqwordparpro{m}$ should not be understood as denoting equality, but merely as having projected away one bar from $\eqwordpar{m}$.}~$\eqwordparpro{m}$ over $\SigmaI^*$ via $ x\eqwordparpro{m} x'$ if and only if
for all $\eqwordpar{m}$ classes~$S$: $x \in \proj{S}$ if and only if $x' \in \proj{S}$. 

\begin{numremark}
\label{rem_eqprojindex}
$\idx{\eqwordparpro{m}}  \le 2^{\idx{\eqwordpar{m}}}$.	
\end{numremark}

Furthermore, every $\eqwordparpro{m}$ equivalence class is regular: we have
\begin{equation*}
\eqclasswordparpro{x}{m} = \bigcap\nolimits_{S \in (\SigmaI \times \SigmaO)^*\quotientwordpar{m}\colon x \in \proj{S}}
 \proj{S} \cap \bigcap\nolimits_{S \in (\SigmaI \times \SigmaO)^*\quotientwordpar{m} \colon x \notin \proj{S}} \SigmaI^* \setminus \proj{S},
 \label{eq_projrelcharac}
\end{equation*}
where each projection~$\proj{S}$ and each complemented projection~$\SigmaI^* \setminus \proj{S}$ is recognized by a DFA of size~$2^{\idx{\eqwordpar{m}}}$. Thus, $\eqclasswordparpro{x}{m}$ is recognized by a DFA of size~$ 2^{\idx{\eqwordpar{m}}^2}$. In particular, we have the following bound that will be applied in the next subsection.

\begin{numremark}
\label{remark_infclasses}
Let $x$ be in a finite equivalence class of $\eqwordparpro{0}$. Then, we have \newline $\size{x} < 2^{ 2^{2n(\log(n) + 2(k^2+k))} }$.	
\end{numremark}

\subsection{A Game on Equivalence Classes}
\label{subsec_equivgame}
In this section, we show that the winner of a delay game with max-regular winning condition does not depend on the exact delay function under consideration, as long as it is unbounded and $f(0)$ is \emph{large enough}. Note that this is true for the game analyzed in the previous section: Player~$O$ wins for every unbounded delay function. 

\begin{theorem}
\label{thm_anyunboundedsufficient}
Let $\aut$ be a max-automaton with $n$ states and $k$ counters and let $d =  2^{ 2^{2n(\log(n) + 2(k^2 + k))} }$. The following are equivalent:
\begin{enumerate}
	\item Player~$O$ wins $\delaygame{L(\aut)}$ for some $f$.
	\item Player~$O$ wins $\delaygame{L(\aut)}$ for every unbounded~$f$ with $f(0) \ge 2d$.
\end{enumerate}
\end{theorem}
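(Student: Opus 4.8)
The implication $(2)\Rightarrow(1)$ is immediate, since unbounded delay functions $f$ with $f(0)\ge 2d$ exist. For $(1)\Rightarrow(2)$ my plan is to route through the game~$\game$ announced at the start of Section~\ref{sec_unbounded}, which I would set up roughly as follows: a position of $\game$ records a precision~$m\in\nats$, the $\eqwordpar{m}$-class of the already-matched combined word over $\SigmaI\times\SigmaO$, and the $\eqwordparpro{m}$-class of the still-unmatched lookahead over $\SigmaI$ (plus the bit whether that class is infinite); in a round, Player~$I$ raises the precision, refines the matched-part class, and extends the lookahead class by appending input, whereupon Player~$O$ carves off a matched block, i.e.\ she composes the matched-part class with a new $\eqwordpar{m}$-class whose input projection is that of a prefix split off from the lookahead, updating both components. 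The rules require the precision to grow without bound, so by Lemma~\ref{lemma_swapeqwords} the resulting $\omega$-sequence of classes determines membership in $L(\aut)$ unambiguously; this is taken as the winning condition for Player~$O$. I would then prove two implications: (i) if Player~$O$ wins $\delaygame{L(\aut)}$ for some delay function, she wins $\game$; and (ii) if Player~$O$ wins $\game$, she wins $\delaygame{L(\aut)}$ for every unbounded $f'$ with $f'(0)\ge 2d$. Together with $(2)\Rightarrow(1)$ this gives the claim, and note that no determinacy of $\game$ is needed.

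For (i) I would have Player~$O$ play $\game$ while running in the background a play of $\delaygame{L(\aut)}$ consistent with a fixed winning strategy $\tau_O$: whenever Player~$I$ in $\game$ extends the lookahead class at precision~$m$, she appends to the background input a concrete $\SigmaI$-word from that (non-empty) class, feeds it to $\tau_O$, reads off the concrete output, and reports in $\game$ the $\eqwordpar{m}$-class of the resulting matched block along with the refinements the concrete play forces. Since $\tau_O$ is winning, the background outcome lies in $L(\aut)$, and since the winning condition of $\game$ is by construction exactly membership in $L(\aut)$ at the level of the classes (via Lemma~\ref{lemma_swapeqwords}), the $\game$-play is won by Player~$O$.

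For (ii) I would go the other way: from a winning strategy $\sigma_O$ in $\game$ and an arbitrary unbounded $f'$ with $f'(0)\ge 2d$, Player~$O$ plays $\delaygame{L(\aut)}$ while running a background play of $\game$ — she parses Player~$I$'s incoming letters into input blocks, abstracts each to its $\eqwordparpro{m}$-class, raises $m$ whenever her lookahead has grown enough to support the next precision level, passes these abstract moves to $\sigma_O$, and turns $\sigma_O$'s abstract output classes into concrete $\SigmaO$-letters. The invariant I would maintain is that the unmatched lookahead always lies in an \emph{infinite} $\eqwordparpro{m}$-class; this is what lets her realize each abstract move concretely inside the lookahead she actually has and carve off blocks without trapping herself in a finite class. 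The hypothesis $f'(0)\ge 2d$ is exactly what bootstraps this invariant: by Remark~\ref{remark_infclasses} every $\SigmaI$-word of length at least $d$ lies in an infinite $\eqwordparpro{0}$-class, so an initial lookahead of $2d$ lets Player~$O$ split off a first block that is itself in an infinite class while keeping a reserve that is also in an infinite class. As $f'$ is unbounded, the lookahead and hence the precision tend to infinity, so Lemma~\ref{lemma_swapeqwords} carries the class-level win in $\game$ back to a win in $\delaygame{L(\aut)}$.

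The main obstacle will be the timing of the two simulations and its interplay with these invariants. In (i) I must interleave the abstract rounds of $\game$ with the concrete rounds of $\delaygame{L(\aut)}$ so that $\tau_O$ has always consumed a full abstract chunk before it must answer, and reconcile the block lengths in the background play with those in $\game$. In (ii) the precision must be raised slowly enough that the lookahead available under $f'$ never falls short of what the next abstract block needs, yet fast enough that it still tends to infinity; carrying this out while preserving the infinite-class invariant, while correctly handling the refinement of the previously committed coarser classes as the precision grows (consistent because an actual concrete play underlies the simulation), and while checking that the winning condition of $\game$ is genuinely equivalent to membership in $L(\aut)$ under these abstractions, is the technical heart of the argument.
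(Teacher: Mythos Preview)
Your overall plan matches the paper's: introduce an abstract game $\game$ on equivalence classes, show (i) that Player~$O$ winning $\delaygame{L(\aut)}$ for some $f$ implies she wins $\game$, and (ii) that winning $\game$ implies she wins $\delaygame{L(\aut)}$ for every unbounded $f$ with $f(0)\ge 2d$; Lemma~\ref{lemma_swapeqwords} supplies correctness and Remark~\ref{remark_infclasses} the bootstrap for the infinite-class invariant. You also correctly note that no determinacy of $\game$ is needed and that unboundedness of $f$ is what drives the precision to infinity.

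Where you diverge is in the design of $\game$ itself, and the paper's version is considerably leaner. There is no cumulative matched-part class, no refinement of previously committed classes, and no ``carving off a prefix'' from a lookahead class. Instead, in round~$i$ Player~$I$ simply picks a precision~$r_{i+1}$ and one infinite projected class~$\eqclasswordparpro{x_{i+1}}{r_{i+1}}$ (he is always exactly one class ahead, having picked two classes in round~$0$), after which Player~$O$ picks a full class~$\eqclasswordpar{{x_i\choose y_i}}{r_i}$ with the right projection; Player~$O$ wins if $(r_i)_i$ fails to be a rate or the sequence of representatives lies in $L(\aut)$. This buys two things. First, the issue you flag of refining earlier coarser commitments as precision grows simply vanishes: each block is fixed once at its own precision~$r_i$ and never revisited. (In your design, letting \emph{Player~$I$} refine the matched-part class is awkward for direction~(i): his refinement need not agree with your concrete background play, so either the refinement is dead state or the simulation breaks.) Second, ``splitting off a prefix from a $\eqwordparpro{m}$-class'' --- which is not well-defined at the class level --- is avoided entirely, because blocks arrive one at a time already delimited. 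In the paper's simulation for~(ii), the block boundaries and the schedule of precisions are then chosen by Player~$O$, acting on Player~$I$'s behalf in the background $\game$-play, via an auxiliary strictly increasing sequence~$(d_i)_i$ of length thresholds and an explicit buffer; this is where the real work sits, and your sketch of~(ii) is morally the same argument, only you would have to carry it out inside your more intricate game rather than outside it.
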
 

This result is proven by defining a delay-free game~$\game(\aut)$ where Player~$I$ picks equivalence classes of $\eqwordparpro{m}$ for increasing $m$ and Player~$O$ constructs a run of $\aut$ on a word over $\SigmaI \times \SigmaO$ that is compatible with the choices of Player~$I$. Furthermore, Player~$I$ is always  one move ahead to account for the delay.

Fix $\aut = (Q, C, \SigmaI \times \SigmaO, q_I, \delta, \ell, \phi)$ with $\size{Q} = n$ and $\size{C} = k$. We define the game~$\game(\aut)$ between Player~$I$ and Player~$O$ played in rounds~$i = 0, 1, 2, \ldots$ as follows:
In round~$0$, Player~$I$ picks natural numbers~$r_0, r_1$ and picks infinite equivalence classes~$\eqclasswordparpro{x_0}{r_0}$ and $\eqclasswordparpro{x_1}{r_1}$. Then, Player~$O$ picks an equivalence class~$\eqclasswordpar{{ x_0 \choose y_0 }}{r_0}$. Note that this choice is independent of the representative~$x_0$. Now, consider round~$i>0$: Player~$I$ picks~$r_{i+1} \in \nats$ and an infinite equivalence class~$\eqclasswordparpro{x_{i+1}}{r_{i+1}}$. Afterwards, Player~$O$ picks an equivalence class~$\eqclasswordpar{{ x_{i} \choose y_{i} }}{r_{i}}$, whose choice is again independent of the representative~$x_i$. 

Thus, the players produce a play \[\eqclasswordparpro{ x_0 }{r_0}\,
\eqclasswordpar{{ x_0 \choose y_0 }}{r_0}\,
\eqclasswordparpro{ x_1 }{r_1}\,
\eqclasswordpar{{ x_1 \choose y_1 }}{r_1}\,
\eqclasswordparpro{ x_2 }{r_2}\,
\eqclasswordpar{{ x_2 \choose y_2 }}{r_2}\,
\cdots.\]
Player~$O$ wins, if $(r_i)_{i \in \nats}$ is not a rate or if ${ x_0 \choose y_0 }{ x_1 \choose y_1 }{ x_2 \choose y_2 } \cdots \in L(\aut)$. Otherwise, i.e., if $(r_i)_{i \in \nats}$ is a rate and ${ x_0 \choose y_0 }{ x_1 \choose y_1 }{ x_2 \choose y_2 } \cdots \notin L(\aut)$, Player~$I$ wins. By Lemma~\ref{lemma_swapeqwords}, winning does not depend on the choice of representatives $x_i$ and $y_i$. Strategies and winning strategies for $\game(\aut)$ are defined as expected, taking into account that Player~$I$ is always one equivalence class ahead.

The following lemma about the relation between $\delaygame{L(\aut)}$ and $\game(\aut)$ implies Theorem~\ref{thm_anyunboundedsufficient}. 

\begin{lemma}
\label{lemma_gameequivalences}
The following are equivalent:
\begin{enumerate}
	
	\item\label{lemma_gameequivalences_somef}
	Player~$O$ wins $\delaygame{L(\aut)}$ for some $f$.
	
	\item\label{lemma_gameequivalences_allf} 
	Player~$O$ wins $\delaygame{L(\aut)}$ for every unbounded~$f$ with $f(0) \ge 2d$.
	
	\item\label{lemma_gameequivalences_eqgame} 
	Player~$O$ wins $\game(\aut)$.

\end{enumerate}
\end{lemma}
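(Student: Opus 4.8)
The plan is to prove the cycle of implications $\ref{lemma_gameequivalences_allf} \Rightarrow \ref{lemma_gameequivalences_somef} \Rightarrow \ref{lemma_gameequivalences_eqgame} \Rightarrow \ref{lemma_gameequivalences_allf}$. The implication $\ref{lemma_gameequivalences_allf} \Rightarrow \ref{lemma_gameequivalences_somef}$ is immediate, since any unbounded $f$ with $f(0) \ge 2d$ is in particular \emph{some} delay function. For $\ref{lemma_gameequivalences_somef} \Rightarrow \ref{lemma_gameequivalences_eqgame}$, suppose Player~$O$ wins $\delaygame{L(\aut)}$ for some $f$. The idea is to have Player~$O$ simulate her winning strategy $\stratO$ in $\game(\aut)$: when Player~$I$ commits to $\eqwordparpro{r_i}$-classes of input words, Player~$O$ picks representatives $x_i$ (possible, as the classes are non-empty, in fact infinite), feeds the letters of $x_0 x_1 \cdots$ to $\stratO$ respecting the delay $f$ — here she needs sufficiently many letters to be available, which is guaranteed because Player~$I$ is always one class ahead and the classes can be chosen long enough — and lets $\stratO$ produce the corresponding $y_i$, then reports the $\eqwordpar{r_i}$-class of ${x_i \choose y_i}$. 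If $(r_i)_i$ is not a rate she wins by definition; otherwise the outcome ${x_0 \choose y_0}{x_1 \choose y_1}\cdots$ is $\eqwordpar{r_i}$-equivalent blockwise to an actual $\stratO$-consistent play, hence in $L(\aut)$ by Lemma~\ref{lemma_swapeqwords}, so Player~$O$ wins $\game(\aut)$.

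The hard direction is $\ref{lemma_gameequivalences_eqgame} \Rightarrow \ref{lemma_gameequivalences_allf}$: from a winning strategy in the abstract game we must build a winning strategy in $\delaygame{L(\aut)}$ for \emph{every} unbounded $f$ with $f(0) \ge 2d$, in particular for slowly-growing $f$. Fix such an $f$. The plan is for Player~$O$ to maintain, during the delay game, a partition of Player~$I$'s revealed input word into blocks $x_0, x_1, x_2, \ldots$ together with a rate $(r_i)_i$, and to play $\game(\aut)$ on the side. The block boundaries must be chosen so that (a) the $r_i$ tend to infinity — achievable because the lookahead in $\delaygame{L(\aut)}$ grows without bound under an unbounded $f$, so Player~$O$ can afford to wait for longer and longer input segments before closing a block — and (b) the $\eqwordparpro{r_i}$-class of each $x_i$ can be announced to the simulated Player~$I$ strictly \emph{before} Player~$O$ must commit to her letters in $x_{i-1}$, which is exactly the one-class-ahead structure of $\game(\aut)$ and is where the hypothesis $f(0) \ge 2d$ is used: Remark~\ref{remark_infclasses} bounds the length of any word in a \emph{finite} $\eqwordparpro{0}$-class by $d$, so an initial lookahead of $2d$ ensures the first block and the start of the second can be read off and assigned \emph{infinite} classes, matching the move structure of $\game(\aut)$ where Player~$I$ always picks infinite classes. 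Player~$O$ then translates the $\game(\aut)$-strategy's reply $\eqclasswordpar{{x_i \choose y_i}}{r_i}$ into concrete output letters $v_j$ for the positions constituting block $i$, choosing any $y_i$ in that class consistent with the already-fixed $x_i$ (possible since the $\eqwordpar{r_i}$-class projects onto the announced $\eqwordparpro{r_i}$-class).

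Finally I would argue correctness: a play of $\delaygame{L(\aut)}$ consistent with this constructed strategy induces a play of $\game(\aut)$ consistent with the fixed winning strategy; since the $r_i$ form a rate, Player~$O$ wins that $\game(\aut)$-play, so the blockwise outcome ${x_0 \choose y_0}{x_1 \choose y_1} \cdots$ lies in $L(\aut)$ — but this outcome \emph{is} the outcome of the delay-game play, so Player~$O$ wins there too. The main obstacle is the bookkeeping in this last direction: one must verify carefully that the growing-lookahead budget is always enough to postpone committing to block $i$'s output until the $(i{+}1)$-st class has been determined, that every block gets assigned an infinite class (so that it is a legal Player~$I$ move in $\game(\aut)$), and that the rate built from the block structure is genuinely unbounded; the constant $2d$ and Remark~\ref{remark_infclasses} are precisely what makes the base case of this argument go through, and the unboundedness of $f$ is what sustains it.
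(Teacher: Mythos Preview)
Your proposal is correct and follows essentially the same route as the paper: the same cycle of implications, the same simulation idea in both nontrivial directions (picking long representatives from infinite classes to feed $\stratO$ for $\ref{lemma_gameequivalences_somef}\Rightarrow\ref{lemma_gameequivalences_eqgame}$, and partitioning the incoming input into blocks of controlled length to drive $\stratO'$ for $\ref{lemma_gameequivalences_eqgame}\Rightarrow\ref{lemma_gameequivalences_allf}$), and the same use of Remark~\ref{remark_infclasses} to justify the constant~$2d$. The paper makes the bookkeeping you flag as ``the main obstacle'' explicit by introducing an auxiliary strictly increasing sequence~$(d_i)_i$ (where $d_i$ is the minimal length guaranteeing membership in an infinite $\eqwordparpro{i}$-class), fixing block lengths to be exactly~$d_{r_i}$, and doing a two-case analysis on the buffer size to decide whether to increment~$r_i$; but this is precisely the verification you anticipated, not a different idea. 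One small remark: in $\ref{lemma_gameequivalences_somef}\Rightarrow\ref{lemma_gameequivalences_eqgame}$ the invocation of Lemma~\ref{lemma_swapeqwords} is not strictly necessary, since the representatives you feed to $\stratO$ \emph{are} the representatives witnessing the $\game(\aut)$-outcome---the paper simply uses them directly.
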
 

\begin{proof}
It suffices to show that \ref{lemma_gameequivalences_somef}.\ implies \ref{lemma_gameequivalences_eqgame}.\ and that \ref{lemma_gameequivalences_eqgame}.\ implies \ref{lemma_gameequivalences_allf}., as \ref{lemma_gameequivalences_allf}.\ implies \ref{lemma_gameequivalences_somef}.\ is trivially true. For the sake of readability, we will write $\Gamma$ instead of $\delaygame{L(\aut)}$, as long as $f$ is clear from context. Similarly, we will write $\game$ instead of $\game(\aut)$.

Let Player~$O$ win $\delaygame{L(\aut)}$ for some $f$, say with winning strategy~$\stratO$. We construct a winning strategy~$\stratO'$ for her in $\game$ by simulating a play in $\Gamma$ that is consistent with $\stratO$.

In round~$0$ of $\game$, Player~$I$ picks $r_0$,$r_1$, $\eqclasswordparpro{x_0}{r_0}$, and $\eqclasswordparpro{x_1}{r_1}$. As both equivalence classes are infinite, we can assume without loss of generality $\size{x_0} \ge f(0)$ and $\size{x_1} \ge \sum_{j=1}^{\size{x_0}-1}f(j)$. Now, assume Player~$I$ picks in $\Gamma$ the prefix of $x_0x_1$ of length~$\sum_{j=0}^{\size{x_0}-1}f(j)$ during the first $\size{x_0}$ rounds. Let $y_0$ of length $\size{x_0}$ be the answer of Player~$O$ to these choices determined by the winning strategy~$\stratO$. We define $\stratO'$ such that it picks $\eqclasswordpar{{x_0 \choose y_0}}{r_0}$ as answer to Player~$I$ picking $r_0$,$r_1$, $\eqclasswordparpro{x_0}{r_0}$, and $\eqclasswordparpro{x_1}{r_1}$ in round~$0$.

Now, we are in the following situation for $i=1$: in $\game$, Player~$I$ has picked natural numbers~$r_0, \ldots, r_i$ and $\eqclasswordparpro{x_0}{r_0}, \ldots, \eqclasswordparpro{x_i}{r_i}$ with $\size{x_0} \ge f(0)$, $\size{x_1} \ge \sum_{j=1}^{\size{x_0}-1}f(j)$, and $\size{x_{i'}} \ge \sum_{j=0}^{\size{x_{i'-1}}-1}f(\size{x_0 \cdots x_{i'-1}} + j)$ for every $i'$ with $1 < i' \le i$ (which is empty for $i=1$). Player~$O$ has picked $\eqclasswordpar{{x_0 \choose y_0}}{r_0}, \ldots, \eqclasswordpar{{x_{i-1} \choose y_{i-1}}}{r_{i-1}}$. Further, in $\Gamma$, Player~$I$ has picked the prefix of $x_0 \cdots x_i$ of length~$\sum_{j=0}^{\size{x_0\cdots x_{i-1}}-1}f(j)$ during the first $\size{x_0 \cdots x_{i-1}}$ rounds, which was answered by Player~$O$ according to $\stratO$ by picking $y_0 \cdots y_{i-1}$.

In this situation, it is Player~$I$'s turn in $\game$, i.e., he picks $r_{i+1}$ and $\eqclasswordparpro{x_{i+1}}{r_{i+1}}$. Again, as the  class is infinite, we can assume $\size{x_{i+1}} \ge \sum_{j=0}^{\size{x_{i}}-1}f(\size{x_0 \cdots x_{i}} + j)$.
Thus, we continue the play in $\Gamma$ by letting Player~$I$ pick letters such that he has picked the prefix of $x_0 \cdots x_{i+1}$ of length~$\sum_{j=0}^{\size{x_0\cdots x_{i}}-1}f(j)$ during the first $\size{x_0 \cdots x_{i}}$ rounds. Again, this is answered by Player~$I$ by picking $y_0 \cdots y_i$ such that $\size{y_{i'}} = \size{x_{i'}}$ according to $\stratO$. Now, we define $\stratO'$ such that it picks $\eqclasswordpar{{x_i \choose y_i}}{r_i}$ as next move. Thus, we are in the situation described above for $i+1$.

Let $w' = \eqclasswordparpro{ x_0 }{r_0}\,
\eqclasswordpar{{ x_0 \choose y_0 }}{r_0}\,
\eqclasswordparpro{ x_1 }{r_1}\,
\eqclasswordpar{{ x_1 \choose y_1 }}{r_1}\,
\eqclasswordparpro{ x_2 }{r_2}\,
\eqclasswordpar{{ x_2 \choose y_2 }}{r_2}\,
\cdots$ be a play in $\game$ that is consistent with $\stratO'$. Consider the outcome~$w = {x_0 \choose y_0 }{x_1 \choose y_1 }{x_2 \choose y_2 }\cdots $ of the play in $\Gamma$ constructed during the simulation. It is consistent with $\stratO$, hence $w \in L(\aut)$. Accordingly, Player~$O$ wins the play~$w'$. Thus, $\stratO'$ is indeed a winning strategy for Player~$O$ in $\game$.


Now, consider the second implication to be proven: assume Player~$O$ has a winning strategy~$\stratO'$ for $\game$ and let $f$ be an arbitrary unbounded delay function with $f(0) \ge 2d$. We construct a winning strategy~$\stratO$ for Player~$O$ in $\Gamma$ by simulating a play of $\game$.

To this end, we define a strictly increasing auxiliary rate~$(d_i)_{i \in \nats}$ recursively as follows: let $d_0$ be minimal with the property that every word of length at least $d_0$ is in some infinite equivalence class of $\eqwordparpro{0}$, i.e., $d_0 \le d = 2^{{  2^{2n(\log n + 2(k^2 + k))} }}$ due to Remark~\ref{remark_infclasses}. Now, we define $d_{i+1}$ to be the minimal integer strictly greater than $d_i$ such that every word of length at least $d_{i+1}$ is in some infinite equivalence class of $\eqwordparpro{i+1}$.

Let Player~$I$ pick $x_0 x_1$ of length~$f(0) \ge 2\cdot d_0$ in round~$0$ of $\Gamma$ (the exact decomposition into $x_0$ and $x_1$ is irrelevant, we just use it to keep the notation consistent). Now, decompose $x_0 x_1 = x_0' x_1' \beta_1$ such that $\size{x_0'} = \size{x_1'} = d_0$. We simulate these moves by letting Player~$I$ pick $r_0 = r_1 = 0$, $\eqclasswordparpro{x_0'}{r_0}$, and $\eqclasswordparpro{x_1'}{r_1}$ in round~$0$ of $\game$, which are legal moves by the choice of $d_0$. 

Thus, we are in the following situation for $i=1$: in $\Gamma$, Player~$I$ has picked $x_0 \cdots x_i$ and Player~$O$ has picked $y_0 \cdots y_{i-2}$. Furthermore, in $\game$, Player~$I$ has picked $\eqclasswordparpro{x_0'}{r_0}, \ldots, \eqclasswordparpro{x_i'}{r_i}$ and there is a buffer~$\beta_i \in \SigmaI^*$ such that $x_0 \cdots x_i = x_0' \cdots x_i' \beta_i$. Finally, Player~$O$ has picked $\eqclasswordpar{{x_0' \choose y_0}}{r_0} \cdots \eqclasswordpar{{x_{i-2}' \choose y_{i-2}}}{r_{i-2}}$.

In this situation, it is Player~$O$'s turn and $\stratO'$ returns a class~$\eqclasswordpar{{x_{i-1}' \choose y_{i-1}}}{r_{i-1}}$. Thus, we define $\stratO$ such that it picks $y_{i-1}$ during the next rounds, in which Player~$I$ picks letters forming $x_{i+1}$ satisfying $\size{x_{i+1}} \ge \size{y_{i-1}}$. We consider two cases to simulate these in $\game$:
\begin{enumerate}
	
	\item If $\size{\beta_ix_{i+1}} \ge 2d_{r_i+1} - d_{r_i}$, then Player~$I$ picks $r_{i+1} = r_i+1$ and $\eqclasswordparpro{x_{i+1}'}{r_{i+1}}$, where $x_{i+1}'$ is the prefix of $\beta_ix_{i+1}$ of length~$d_{r_{i+1}}$. This is an infinite equivalence class by the choice of $d_{r_{i+1}}$. The remaining suffix of $\beta_ix_{i+1}$ is stored in the buffer~$\beta_{i+1}$, i.e., we have $\beta_ix_{i+1} = x_{i+1}' \beta_{i+1}$.
	 
	\item Now, consider the case $\size{\beta_ix_{i+1}} < 2d_{r_i+1} - d_{r_i}$: we show $\size{\beta_ix_{i+1}} \ge d_{r_i}$. Then, Player~$I$ picks $r_{i+1} = r_i$ and $\eqclasswordparpro{x_{i+1}'}{r_{i+1}}$, where $x_{i+1}'$ is the prefix of $\beta_ix_{i+1}$ of length~$d_{r_{i+1}} = d_{r_{i}}$, which is again an infinite equivalence class by the choice of $d_{r_{i+1}}$. The remaining suffix of $\beta_ix_{i+1}$ is stored in the buffer~$\beta_{i+1}$, i.e., we have $\beta_ix_{i+1} = x_{i+1}' \beta_{i+1}$.

To show 	$\size{\beta_ix_{i+1}} \ge d_{r_i}$, we again consider two cases: if $r_{i-1} = r_i$, then we have 
\[
\size{\beta_ix_{i+1}} \ge \size{x_{i+1}} \ge \size{y_{i-1}} = \size{x_{i-1}'} = d_{r_{i-1}} = d_{r_i}. 
\]
On the other hand, if $r_{i-1} < r_i$, which implies $r_{i-1} +1  = r_i$, as we are in the second case above, then we have $
\size{\beta_{i-1}x_{i}} \ge 2d_{r_{i-1}+1} - d_{r_{i-1}}$
and $x_i'$ is the prefix of length $d_{r_i} = d_{r_{i-1}+1}$ of $\beta_{i-1}x_{i}$, which implies $\size{\beta_i} \ge d_{r_{i-1}+1} - d_{r_{i-1}} $, as it is the remaining suffix of $\beta_{i-1}x_{i}$. Finally, we have $\size{x_{i+1}} \ge \size{y_{i-1}} = \size{ x_{i-1}'} = d_{r_{i-1}}$.
Altogether, we obtain
\[
\size{ \beta_ix_{i+1} } \ge (d_{r_{i-1}+1} - d_{r_{i-1}}) + d_{r_{i-1}} = d_{r_{i-1}+1} = d_{r_i}.
\] 
\end{enumerate}
In both cases, we are back in the situation described above for $i+1$.

Let $w = {x_0 x_1 x_2 \cdots \choose y_0 y_1 y_2 \cdots}$ be the outcome of a play in $\Gamma$ that is consistent with $\stratO$. The play~$\eqclasswordparpro{ x_0' }{r_0}\,
\eqclasswordpar{{ x_0' \choose y_0 }}{r_0}\,
\eqclasswordparpro{ x_1' }{r_1}\,
\eqclasswordpar{{ x_1' \choose y_1 }}{r_1}\,
\eqclasswordparpro{ x_2' }{r_2}\,
\eqclasswordpar{{ x_2' \choose y_2 }}{r_2}\,
\cdots $ in $\game$ constructed during the simulation is consistent with $\stratO'$. As $f$ is unbounded, $(r_i)_{i \in \nats}$ is unbounded as well and thus a rate. Hence, we conclude ${x_0' \choose y_0}{x_1' \choose y_1}{x_2' \choose y_2} \in L(\aut)$, as $\stratO'$ is a winning strategy. Also, a straightforward induction shows $x_0 x_1 x_2 \cdots = x_0' x_1' x_2 ' \cdots$. Thus, $w \in L(\aut)$, i.e., $\stratO$ is a winning strategy for Player~$O$ in $\Gamma$.
\end{proof}

%
%
%

\section{Solving Max-regular Delay games with Unbounded Lookahead}
\label{sec_decidability}
Unlike for $\omega$-regular delay games, bounded lookahead is not always sufficient for Player~$O$ to win a max-regular delay game. Hence, determining the winner with respect to arbitrary delay functions is not equivalent to determining the winner with respect to bounded delay functions, which is known to be decidable~\cite{Zimmermann15}. We refer to the former problem as \myquot{solving max-regular delay games}. In this concluding section, we discuss some obstacles one has to overcome in order to extend the decidability result for bounded lookahead to unbounded one. Furthermore, we give straightforward lower bounds on the complexity. 

Proving upper bounds, e.g., decidability of determining the winner of max-regular delay games with respect to arbitrary delay functions, is complicated by the need for unbounded lookahead. All known decidability results are for the case of bounded lookahead~\cite{HoltmannKaiserThomas12,KleinZimmermann15,Zimmermann15}. In particular, decidability of max-regular delay games with respect to bounded lookahead~\cite{Zimmermann15} is based on a game similar to the game~$\game(\aut)$ presented in Section~\ref{sec_unbounded}, but where only equivalence classes of $\eqwordparpro{1}$ and $\eqwordpar{1}$ are picked by the players. This results in a finite delay-free game with max-regular winning condition, which is effectively solvable~\cite{Bojanczyk14}. Correctness follows from the fact that the error introduced by the imprecise equivalence relation~$\eqwordpar{1}$ is bounded, if the lookahead is bounded. As we are only interested in (un)boundedness, this error is negligible. 

However, for unbounded lookahead, the error is unbounded as well. In particular, the example presented in Section~\ref{sec_bounded} shows that bounded counters might grow arbitrarily large during different plays: the winning condition $L$ described in the proof of Theorem~$\ref{thm_unboundednecessary}$ is recognized by a max-automaton with four counters: $c_i$ counts the length of input blocks and is reset at every $\sepp$, $c_o'$ is incremented during prefixes of possible output blocks and reset at the end of such a block. Furthermore, the value of $c_o'$ is copied to $c_o$ every time the requirement on the first and last letter of an output block is met. Finally, a counter~$c_\sepp$ counts the number of $\sepp$'s in the word. The acceptance condition of the automaton recognizing $L$ is given by the formula
\[
\text{\myquot{$\limsup \rho_{c_\sepp} < \infty$}}\,\, \vee \,\, 
\text{\myquot{$\limsup \rho_{c_i} < \infty$}}\,\, \vee \,\,
\text{\myquot{$\limsup \rho_{c_o} = \infty$}}
.\]
As already argued, Player~$O$ has a winning strategy for $\delaygame{L}$, provided $f$ is unbounded. However,  she does not have a strategy that bounds the counters $c_\sepp$ and $c_i$ to some fixed value among all consistent plays that are won due to $c_\sepp$ or $c_i$ being bounded: for example, Player~$I$ can pick any finite number of $\sepp$'s and then stop doing so. This implies that $c_\sepp$ is bounded, but with an arbitrarily large value among different plays. The lack of such a uniform bound in itself is not surprising, but entails that one has to deal with arbitrarily large counter values when trying to extend the approach described above for the setting with bounded lookahead. In particular, it is not enough to replace $\eqwordparpro{1}$ and $\eqwordpar{1}$ by $\eqwordparpro{m}$ and $\eqwordpar{m}$ for some fixed $m$ that only depends on the winning condition.

Two other possible approaches follow from the results mentioned in this paper: first, one could show that $\game(\aut)$ can be solved effectively. However, the game is of infinite size and not in one of the classes of effectively solvable infinite games, e.g., pushdown games. Second, one can pick any unbounded delay function with $f(0)$ large enough and solve $\delaygame{L(\aut)}$, as winning with respect to one such function is equivalent to winning with respect to all of them. However, $\delaygame{L(\aut)}$ is again infinite and not in in one of the classes of effectively solvable infinite games.

Finally, there is a class of winning conditions for which solving delay games is indeed known to be undecidable, namely (very restricted fragments of) $\omega$-context-free conditions~\cite{FridmanLoedingZimmermann11}. However, this result is based on the language~$\set{a^nb^n \mid n \in \nats}$ being context-free, which suffices to encode two-counter machines. As max-automata have no mechanism to compare arbitrarily large numbers exactly, this simple encoding of two-counter machines cannot be captured in a delay game with max-regular winning condition. 

This can be overcome by allowing quantification over arbitrary sets: recently, and after being an open problem for more than a decade, satisfiability of MSO$+$U over infinite words was shown to be undecidable~\cite{BPT15} by capturing termination of two-counter machines by MSO$+$U formulas based on a specially tailored encoding. However, the resulting formulas have six alternations between existential and universal set quantifiers and then a block of (negated) unbounding quantifiers. To adapt this proof to show undecidability of max-regular delay games with respect to arbitrary delay functions, one has to replace the set quantifiers by the interaction between the players, which seems unlikely to achieve.

On the other hand, one can prove some straightforward lower bounds. As usual, solving delay games with max-regular winning conditions (given by max-automata) is at least as hard as solving the universality problem for max-automata: given such an automaton~$\aut$ over some alphabet~$\Sigma$, we change the alphabet to $\Sigma \times \Sigma$ by replacing each letter~$a$ on a transition by the letter~${a \choose a}$. Call the resulting automaton~$\aut'$. The game~$\delaygame{\aut'}$ is won by Player~$O$ if and only if $L(\aut)$ is universal, independently of $f$: if $L(\aut)$ is not universal, when Player~$I$ can produce some $\alpha \notin L(\aut)$ and thereby win; if it is indeed universal, then Player~$O$ can mimic the choices of Player~$I$ and thereby win. 

\begin{proposition}
Solving max-regular delay games is at least as hard as solving the universality problem for max-automata.	
\end{proposition}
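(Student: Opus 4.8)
The plan is to make rigorous the informal reduction sketched just above: from a max-automaton~$\aut$ I construct, in logarithmic space, a max-automaton~$\aut'$ over the doubled alphabet such that Player~$O$ wins $\delaygame{L(\aut')}$ for some (equivalently, every) delay function~$f$ if and only if $L(\aut)$ is universal. The one point that needs care---and the only real obstacle---is that the transition function of a max-automaton must be deterministic \emph{and} complete, whereas the naive construction, which simply relabels every transition~$a$ by~${a \choose a}$, provides no transitions for the off-diagonal letters~${a \choose b}$ with $a\neq b$. I would repair this with a rejecting sink.

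Concretely, given $\aut = (Q, C, \Sigma, q_I, \delta, \ell, \phi)$, I define $\aut'$ over $\Sigma \times \Sigma$ with state set $Q \cup \set{q_\bot}$ and counter set $C \cup \set{c_\bot}$, where $q_\bot$ and $c_\bot$ are fresh: on a diagonal letter~${a \choose a}$ from a state~$q \in Q$ it behaves exactly like~$\aut$ on~$a$ (same target, same counter-operation label); every off-diagonal letter from a state in~$Q$, as well as every letter from~$q_\bot$, leads to~$q_\bot$; the self-loops at~$q_\bot$ are labeled~$\inc{c_\bot}$ and all other new transitions carry the empty label; and the acceptance condition~$\phi'$ is the conjunction of~$\phi$ with the condition \myquot{$\limsup \rho_{c_\bot} < \infty$}. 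Since $c_\bot$ is untouched by the transitions inside~$Q$ but incremented at every step once~$q_\bot$ is entered, inspecting the unique run of~$\aut'$ on an input~${\alpha \choose \beta}$ shows that $L(\aut') = \set{ {\alpha \choose \alpha} \mid \alpha \in L(\aut) }$: if $\alpha = \beta$ the run stays inside~$Q$, mirrors the run of~$\aut$ on~$\alpha$ and keeps~$c_\bot$ bounded, so~$\phi'$ collapses to~$\phi$; if $\alpha \neq \beta$ the run eventually reaches~$q_\bot$, so~$c_\bot$ is unbounded and~$\phi'$ fails. The map $\aut \mapsto \aut'$ is clearly computable in logarithmic space.

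It then remains to observe that, for every delay function~$f$, Player~$O$ wins $\delaygame{L(\aut')}$ if and only if $L(\aut)$ is universal. If $L(\aut) = \Sigma^\omega$, Player~$O$ wins by copying: when she has to choose her $i$-th letter, Player~$I$ has already revealed at least the first $i+1$ of his letters (because $f(j)\ge 1$ for all $j$), so she can make her $i$-th letter equal to his $i$-th letter; the resulting outcome is~${\alpha \choose \alpha}$ for Player~$I$'s sequence~$\alpha$, and ${\alpha \choose \alpha} \in L(\aut')$ since $\alpha \in L(\aut)$. Conversely, if some $\alpha \notin L(\aut)$, then Player~$I$ wins by playing the letters of~$\alpha$ (chopped into blocks of lengths $f(0), f(1), \dots$) no matter what Player~$O$ does: the outcome is~${\alpha \choose \beta}$ for some~$\beta$, and the only word of~$L(\aut')$ with first component~$\alpha$ would be~${\alpha \choose \alpha}$, which is not in~$L(\aut')$ because $\alpha \notin L(\aut)$. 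As this equivalence holds uniformly in~$f$, deciding whether Player~$O$ wins $\delaygame{L(\aut')}$ (for some delay function, i.e.\ solving the max-regular delay game) decides universality of~$\aut$; the sink gadget handling the completeness requirement is, as noted, the only nonroutine ingredient.
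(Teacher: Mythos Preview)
Your proof is correct and follows the same reduction the paper sketches immediately before the proposition. The one refinement you add---the rejecting sink~$q_\bot$ together with the fresh counter~$c_\bot$ to restore completeness of the transition function over the off-diagonal letters---is a genuine detail the paper's sketch glosses over, and your treatment of it is sound.
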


The best known lower bound on the universality problem for max-automata is $\pspace$-hardness, which stems from max-automata being closed under complementation and the emptiness problem being $\pspace$-hard~\cite{BojanczykTorunczyk09}. The exact complexity of the emptiness problem for max-automata is, to the best of our knowledge, an open problem.

Another lower bound is obtained by considering delay games with weaker winning conditions: solving delay games with winning conditions recognized by deterministic safety automata is $\exptime$-complete~\cite{KleinZimmermann15}. Such automata can be transformed into max-automata without increasing the number of states: turn the non-safe states into sinks and increment a designated counter~$c$ on every transition not leading into a non-safe state. Then, the max-automaton with acceptance condition~\myquot{$\limsup \rho_c = \infty$} recognizes the same language as the original safety automaton. Hence, we obtain the following lower bound.

\begin{theorem}
	Solving max-regular delay games is $\exptime$-hard. 
\end{theorem}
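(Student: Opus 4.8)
The plan is to reduce from the known $\exptime$-completeness of solving delay games with deterministic safety winning conditions~\cite{KleinZimmermann15}: given a deterministic safety automaton, it is $\exptime$-hard to decide whether Player~$O$ wins the delay game (for some delay function) with the recognized language as winning condition. So it suffices to exhibit a polynomial-time translation from deterministic safety automata to max-automata that preserves the recognized language, and to observe that the resulting max-regular delay game has the same winner as the original safety delay game, for \emph{every} delay function. The preservation of the winner then follows because the two games are literally the same game once we know the two automata accept the same language.

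The core step is the automaton translation, which is exactly the one sketched just before the statement. Given a deterministic safety automaton with state set~$Q$, designate the unsafe (rejecting) states; redirect all their outgoing transitions to loop on themselves (making them sinks), and on every transition whose target is \emph{not} unsafe, attach the single counter operation $\inc{c}$ for a fresh counter~$c$, leaving transitions into the sink unlabeled. Take the acceptance condition to be \myquot{$\limsup \rho_c = \infty$}. I would then argue correctness: on an input $\alpha$ accepted by the safety automaton, the run never enters the sink, so $c$ is incremented on every transition and $\rho_c$ is strictly increasing, hence $\limsup \rho_c = \infty$ and the max-automaton accepts; conversely, if $\alpha$ is rejected by the safety automaton, the run enters the sink after finitely many steps, $c$ is never incremented thereafter, so $\rho_c$ is eventually constant, $\limsup \rho_c < \infty$, and the max-automaton rejects. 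Thus the two automata recognize the same language, and the construction runs in polynomial (indeed linear) time in the size of the input automaton.

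Finally I would close the reduction: let $\aut$ be the max-automaton obtained from the safety automaton, so $L(\aut)$ equals the safety language. For every delay function~$f$, the game $\delaygame{L(\aut)}$ is by definition identical to the delay game with the safety winning condition, hence has the same winner. Since deciding whether Player~$O$ wins the latter for some $f$ is $\exptime$-hard, deciding whether Player~$O$ wins $\delaygame{L(\aut)}$ for some $f$ — i.e., solving max-regular delay games — is $\exptime$-hard as well.

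I do not expect any serious obstacle here; the construction and its correctness are routine once one has the $\exptime$-hardness of safety delay games in hand. The only point that needs a word of care is that the counter~$c$ is genuinely fresh and that redirecting the unsafe states' transitions does not change which inputs reach them — but since safety automata never leave an unsafe state anyway (or can be assumed to, without loss of generality, by making unsafe states absorbing before the translation), this is immediate and costs nothing.
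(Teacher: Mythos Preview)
Your proposal is correct and follows essentially the same approach as the paper: reduce from the $\exptime$-hardness of safety delay games via the polynomial-time translation that turns unsafe states into sinks and increments a fresh counter~$c$ on every transition not entering a sink, with acceptance condition \myquot{$\limsup \rho_c = \infty$}. You supply more detail in the correctness argument than the paper does, but the construction and the reduction are identical.
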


This lower bound is oblivious to the intricate acceptance condition of max-automata and relies solely on the transition structure. This is in line with results for $\omega$-regular games: solving delay games with winning conditions given by deterministic parity automata is in $\exptime$, i.e., it matches the lower bound for the special case of safety. It is open whether moving to more concise acceptance conditions for deterministic $\omega$-automata, e.g., Rabin, Streett, and Muller, increases the complexity. These results would directly transfer to max-automata as well. Another aspect that is not exploited by this reduction is the  unbounded lookahead: the safety delay game is always winnable with bounded lookahead. We are currently investigating whether these two aspects can be exploited to improve the bounds.


\bibliographystyle{plain}
\bibliography{biblio}


\end{document}